\algnewcommand\algorithmicinput{\textbf{Input:}}
\algnewcommand\INPUT{\item[\algorithmicinput]}
\algnewcommand\algorithmicoutput{\textbf{Output:}}
\algnewcommand\OUTPUT{\item[\algorithmicoutput]}
\newlength{\noteWidth}
\long\def\notes#1{\ifinner
{\footnotesize #1}
\else 
\marginpar{\parbox[t]{\noteWidth}{\raggedright\footnotesize#1}}
\fi\typeout{#1}}
\def\notes#1{}
\def\spm#1{\notes{SPM:  #1}} 
\def\ad#1{\notes{AD:  #1}}
\def\archive#1{\typeout{save this note:  #1}}
\def\rd#1{{\color{red}#1}} 
\def\mindex#1{\index{#1}}
\def\sq{\hbox{\rlap{$\sqcap$}$\sqcup$}}
\def\qed{\ifmmode\sq\else{\unskip\nobreak\hfil
\penalty50\hskip1em\null\nobreak\hfil\sq
\parfillskip=0pt\finalhyphendemerits=0\endgraf}\fi\medskip}
\long\def\defbox#1{\framebox[.9\hsize][c]{\parbox{.85\hsize}{%
\parindent=0pt
\baselineskip=12pt plus .1pt      
\parskip=6pt plus 1.5pt minus 1pt 
 #1}}}
\long\def\beginbox#1\endbox{\subsection*{}%
\hbox{\hspace{.05\hsize}\defbox{\medskip#1\bigskip}}%
\subsection*{}}
\def\endbox{}
\def\transpose{{\hbox{\it\tiny T}}}
\newsavebox{\junk}
\savebox{\junk}[1.6mm]{\hbox{$|\!|\!|$}}
\def\det{{\mathop{\rm det}}}
\def\state{{\sf X}}
\newcommand{\field}[1]{\mathbb{#1}}
\def\Re{\field{R}}
\def\ind{\field{I}}
\def\bfmath#1{{\mathchoice{\mbox{\boldmath$#1$}}%
{\mbox{\boldmath$#1$}}%
{\mbox{\boldmath$\scriptstyle#1$}}%
{\mbox{\boldmath$\scriptscriptstyle#1$}}}}
\def\bfmb{\bfmath{b}}
\def\bfmw{\bfmath{w}}
\def\bfmX{\bfmath{X}}
\def\bfmY{\bfmath{Y}}
\def\bfmhhaY{\bfmath{\hhaY}} 
\def\bfmhhaY{\hbox to 0pt{$\widehat{\bfmY}$\hss}\widehat{\phantom{\raise 1.25pt\hbox{$\bfmY$}}}}
\def\bfmZ{\bfmath{Z}}
\def\haf{{\hat f}}
\def\til={{\widetilde =}}
\def\tiltheta{\widetilde \theta}
\def\tiltheta{{\tilde \theta}}
\def\clB{{\cal B}}
\def\clE{{\cal E}}
\def\clF{{\cal F}}
\def\clS{{\cal S}}
\def\clT{{\cal T}}
 \def\FRAC#1#2#3{\genfrac{}{}{}{#1}{#2}{#3}}
\def\ddt{{\mathchoice{\FRAC{1}{d}{dt}}%
{\FRAC{1}{d}{dt}}%
{\FRAC{3}{d}{dt}}%
{\FRAC{3}{d}{dt}}}}
\def\ddtp{{\mathchoice{\FRAC{1}{d^{\hbox to 2pt{\rm\tiny +\hss}}}{dt}}%
{\FRAC{1}{d^{\hbox to 2pt{\rm\tiny +\hss}}}{dt}}%
{\FRAC{3}{d^{\hbox to 2pt{\rm\tiny +\hss}}}{dt}}%
{\FRAC{3}{d^{\hbox to 2pt{\rm\tiny +\hss}}}{dt}}}}
\def\half{{\mathchoice{\FRAC{1}{1}{2}}%
{\FRAC{1}{1}{2}}%
{\FRAC{3}{1}{2}}%
{\FRAC{3}{1}{2}}}}
\def\eqdef{\mathbin{:=}}
\def\Expect{{\sf E}}
\def\average#1,#2,{{1\over #2} \sum_{#1}^{#2}}
\def\eye(#1){{\bf(#1)}\quad}
\def\epsy{\varepsilon}
\newtheorem{theorem}{Theorem}[section]
\newtheorem{proposition}[theorem]{Proposition}
\newtheorem{lemma}[theorem]{Lemma}
\def\Lemma#1{Lemma~\ref{#1}}
\def\Theorem#1{Theorem~\ref{#1}}
\def\Section#1{Section~\ref{#1}}
\def\eq#1/{(\ref{e:#1})}
\newcommand{\beqn}[1]{\notes{#1}%
\begin{eqnarray} \elabel{#1}}
\newcommand{\eeqn}{\end{eqnarray} }
\newcommand{\beq}[1]{\notes{#1}%
\begin{equation}\elabel{#1}}
\newcommand{\eeq}{\end{equation}}
\def\bdes{\begin{description}}
\def\edes{\end{description}}
\newcounter{rmnum}
\newenvironment{romannum}{\begin{list}{{\upshape (\roman{rmnum})}}{\usecounter{rmnum}
\setlength{\leftmargin}{14pt}
\setlength{\rightmargin}{8pt}
\setlength{\itemsep}{2pt}
\setlength{\itemindent}{-1pt}
}}{\end{list}}
\newcounter{anum}
\def\ass(#1:#2){(#1\ref{#1:#2})}
\def\ritem#1{
\item[{\sf \ass(\current_model:#1)}]
}
\newenvironment{recall-ass}[1]{%
\begin{description}
\def\current_model{#1}}{
\end{description}
}
\def\Ebox#1#2{%
\begin{center}
 \parbox{#1\hsize}{\epsfxsize=\hsize \epsfbox{#2}}
\end{center}}
\newcommand{\bd}{\begin{description}}
\newcommand{\ed}{\end{description}}
\newcommand{\bt}{\begin{theorem}}
\newcommand{\et}{\end{theorem}}
\newcommand{\ba}{\begin{array}{rcl}}
\newcommand{\ea}{\end{array}}
\def\assume#1{\smallbreak\noindent\textbf{#1}}
\def\td{d}
\def\Real{\text{Re}}
\def\state{{\sf X}}
\def\eqdef{\mathbin{:=}}
\def\Lemma#1{Lemma~\ref{#1}}
\def\Prop#1{Prop.~\ref{#1}}
\def\Theorem#1{Thm.~\ref{#1}}   
\def\haA{\widehat A}
\def\tilb{\tilde{b}}
\def\tilp{\widetilde{p}}
\def\tilA{\tilde{A}}
\def\interest{\Gamma}
\def\Ebox#1#2{%
\begin{center}
\includegraphics[width= #1\hsize]{#2} 
\end{center}}
\def\Fig#1{Fig.~\ref{#1}}
\def\ind{\field{I}}
\def\Re{\field{R}}
\def\barw{\overline{w}}
\def\barb{\bar b}
\def\cstop{{c}_{{s}}}
\def\barcstop{\overline{c}_{{s}}}
\def\barclA{\bar{\cal A}}
\title{Zap~Q-Learning for Optimal Stopping Time Problems}
\author{%
Shuhang Chen\authorrefmark{1} 
\and
Adithya M. Devraj\authorrefmark{2} 
\and
Ana Bu\v{s}i\'{c}\authorrefmark{3} 
 \and 
 Sean  Meyn\authorrefmark{4}%
\thanks{\authorrefmark{1}S.C.\ is with the Department of Mathematics at the University of Florida in Gainesville}%
\thanks{\authorrefmark{2}A.D.\ is with the Department of ECE at the University of Florida}%
\thanks{\authorrefmark{3}A.B.\ is with Inria and DI ENS, \'Ecole Normale	Sup\'erieure, CNRS, PSL Research University, Paris, France}%
\thanks{\authorrefmark{4}S.M.\ is with Department of Electrical and Computer Engineering, University of Florida, and Inria International Chair, Paris}%
\thanks{\textbf{Acknowledgements}:
 Financial support from ARO grant W911NF1810334 is gratefully acknowledged.  Additional support from EPCN 1609131 \&\ CPS~1646229, and French National Research Agency grant ANR-16-CE05-0008.
  }%
  }
\begin{document}

\maketitle

\begin{abstract}   
	
	The objective in this paper is to obtain fast converging reinforcement learning algorithms to approximate solutions to the problem of discounted cost optimal stopping in an irreducible, uniformly ergodic Markov chain, evolving on a compact subset of $\Re^n$. We build on the dynamic programming approach taken by Tsitsikilis and Van Roy, wherein they propose a Q-learning algorithm to estimate the optimal state-action value function, which then defines an optimal stopping rule. We provide insights as to  why the convergence rate of this algorithm can be slow, and propose a fast-converging alternative, the ``Zap-Q-learning" algorithm, designed to achieve {optimal rate of convergence}. For the first time, we prove the convergence of the Zap-Q-learning algorithm under the assumption of linear function approximation setting. We use ODE analysis for the proof, and the optimal asymptotic variance property of the algorithm is reflected via fast convergence in a finance example.
	
\end{abstract}

\def\todo{{\color{red}To do:}\ }


\section{Introduction}
\label{s:intro}

Consider a discrete-time Markov chain $\bfmX = \{X_n: n \geq 0 \}$ evolving on a general state-space $\state $.
The goal in optimal stopping time problems is to minimize over all stopping times $\tau$, the   associated expected cost:
\begin{equation}
\Expect \Bigg[ \sum_{n = 0}^{\tau-1} \beta^n c(X_n) + \beta^\tau \cstop(X_\tau) \Bigg] 
\label{e:opt_stop}
\end{equation}
where $c: \state \to \Re$ denotes the per-stage cost, $\cstop : \state \to \Re$   the terminal cost, and $\beta \in(0,1)$ is the discount factor. Examples of such problems arise mostly in financial applications such as derivatives analysis (see \Section{s:num}), timing of a purchase or sale of an asset, and more generally in problems that involve sequential analysis.

In this work, the optimal decision rule is approximated using reinforcement learning techniques.  
We propose and analyze an optimal  variance algorithm to   approximate the value function associated with the optimal stopping rule.


\subsection{Definitions \& Problem Setup}

We assume that the state-space $\state \subset \Re^m$ is compact,  and we let $\clB$ denote the associated Borel $\sigma$-algebra.
The time-homogeneous Markov chain $\bfmX$ is defined on a probability space ($\Omega,\mathcal{F},\mathcal{P}$), and its {dynamics} is determined by an initial distribution {$\mu: \state \to [0, 1]$}, and a transition kernel $P$: for each $x\in\state$ and $A\in\clB$,
\[
P(x, A) =
\text{Pr}(X_{n+1} \in A \mid X_n = x ) 
\]
It is assumed that $\bfmX$ is \emph{uniformly ergodic}: There exisits a unique invariant probability measure $\pi$, a constant $D < \infty$, and $0 < \rho < 1$, such that, for all $x \in \state$ and $A \in \clB$,
\begin{equation}
\| P^{n} (x, A) - \pi(A) \| \leq D \rho^n \,, \qquad n \geq 0  
\label{e:unierg}
\end{equation}

Denote by $\{\clF_n : n\ge 0\}$  the filtration associated with $\bfmX$.   The Markov property asserts that for bounded measurable functions $h\colon\state\to\Re$,
\[
\Expect[h(X_{n+1}) \mid \clF_n \,, X_n = x]  = \int P(x,dy) h(y)  
\]

In this paper, a stopping time $\tau : \Omega \to [0, \infty)$ is a random variable taking on values in the non-negative integers, with the defining property $\{\omega : \tau(\omega) \le n \,, \, \omega \in \Omega\} \in\clF_n$ for each $n \geq 0$.   A stationary policy is defined to be a measurable function $\phi\colon\state\to\{0,1\}$ that defines a stopping time:
\begin{equation}
\tau^\phi = \min \{n \ge 0 \colon \phi(X_n) = 1  \}
\label{e:tauphi}
\end{equation}

The optimal value function is defined as the infimum of \eqref{e:opt_stop} over all stopping times:  
for any $x\in\state$, 
\begin{equation}
h^*(x) \eqdef \inf_{\tau  } \Expect \big[\sum_{n = 0}^{\tau-1} \beta^n c(X_n) + \beta^{\tau} \cstop(X_{\tau}) | X_0 = x \big]   
\label{e:opt_stop_h_opt}
\end{equation}
Similarly, the associated \textit{Q-function} is defined as 
\[
Q^*(x) \eqdef c(x) + \beta  \Expect [h^*(X_{1}) \mid X_0 = x ]
\] 
It follows that $Q^*$ solves the associated Bellman equation \cite{tsiroy99}: for each $x \in \state$,
\begin{equation}
Q^*(x) \!=\! c(x) + \beta \Expect [ \min(\cstop(X_1),Q^*(X_1)) | X_0 = x ]
\label{e:QOSBell}
\end{equation}  
and  the optimal stopping rule is defined by the corresponding stationary policy,  
\begin{equation}
\phi^*(x) = \ind \{ \cstop(x) \le Q^*(x) \}   
\label{e:opt-policy}
\end{equation}
{where $\ind \{\cdot\}$ denotes the indicator function.} Using the general definition \eqref{e:tauphi}, an optimal stopping time satisfies $\tau^* = \tau^{\phi^*}$.



The Bellman equation \eqref{e:QOSBell} can be expressed as the functional fixed point equation: $Q^* = F Q^*$,
where $F$ denotes the \emph{dynamic programming operator}: for any function $Q: \state \to \Re$, and $x \in \state$,
\begin{equation}
\label{e:DP}
F Q(x)  \eqdef  c (x) + \beta \Expect [ \min(\cstop(X_1), Q(X_1)) | X_0 = x ]
\end{equation}


Analysis is framed in the usual Hilbert space $L_2(\pi)$ of real-valued measurable functions on $\state$ with inner product:
\begin{equation}
\langle f, g \rangle_\pi = \Expect [ f(X) g(X)] \,,
\label{e:pi_inner}
\end{equation}
and norm:
\begin{equation}
\| f \|_\pi = \sqrt{\langle f, f \rangle_\pi} \,,
\label{e:pi_norm}
\end{equation}
where the expectation in \eqref{e:pi_inner} is with respect to the steady state distribution $\pi$.
It is assumed throughout that the cost functions $c$ and $\cstop$ are in $L_2(\pi)$. 

\subsection{Objective}
\label{s:obj}

The goal in this work is to approximate $Q^*$ using a parameterized family of functions $\{Q^\theta\}$, where $\theta \in \Re^d$ denotes the parameter vector. We   restrict to linear parameterization throughout, so that:
\begin{equation}
Q^{\theta}(x) \eqdef \theta^\transpose \psi(x) , \qquad x \in \state
\label{e:Qtheta_linear}
\end{equation}
where $\psi \eqdef [\psi_1,\,\ldots, \psi_d]^\transpose$ with $\psi_i: \state \to \Re$ , $\psi_i \in L_2(\pi)$, $1 \leq i \leq d$, denotes the basis functions.   For any parameter vector $\theta \in \Re^d$, we denote the Bellman error  
\[
\clB_\clE^\theta =
F Q^{\theta} - Q^{\theta}.
\]

It is assumed that the basis functions are linearly independent: The $d \times d$ {covariance} matrix $\Sigma^\psi$ is full rank, where
\begin{equation}
\Sigma^\psi(i, j) = \langle \psi_i, \psi_j \rangle_{\pi}, \quad 1 \leq i,j \leq d
\label{e:Psi}
\end{equation}

In a finite state-space setting, it is possible to construct a consistent algorithm that computes the Q-function exactly \cite{devmey17a}. The Q-learning algorithm of Watkins \cite{wat89, watday92a} can be used in this case (see \cite{yuber07} for a discussion).

In a function approximation setting, we need to relax the goal of solving \eqref{e:QOSBell}. As in previous works \cite{tsiroy99,choroy06,yuber07}, the goal in this paper is to obtain the solution to a \emph{Galerkin relaxation} of \eqref{e:QOSBell}: Find $\theta^*$ such that,  
\begin{equation}
\Expect[ \clB_\clE^{\theta^*} \!\!(X_n)   \psi_i(X_n) ]
=0 \,, \quad 1 \leq i \leq d,
\label{e:eligQalgOS}
\end{equation}
or equivalently,
\begin{equation}
\langle F Q^{\theta^*} - Q^{\theta^*} , \psi_i \rangle_{\pi}  = 0 \,, \quad 1 \leq i \leq d.
\label{e:Q0algOS}
\end{equation}

In \cite{tsiroy99}, the authors show that the solution to the fixed point equations in \eqref{e:Q0algOS} satisfies (see \cite[Theorem 2]{tsiroy99}):
\[
\| Q^{\theta^*} - Q^* \|_\pi \leq \frac{1}{1 - \beta^2} \Big[  \min_\theta \| Q^\theta  -  Q^* \|_\pi  \Big].
\]

\ad{\rd{This is for finite state-action space --- OK to cite?}}

\subsection{Literature Survey}

Obtaining an approximate solution to the original problem \eqref{e:QOSBell} using a modified objective \eqref{e:Q0algOS} was first considered in \cite{tsiroy99}. The authors propose an extension of the TD($0$) algorithm of \cite{sut88, tsiroy97a}, and obtain convergence results under the assumption of a finite state-action space setting. 

Though it is not obvious at first sight, the algorithm in \cite{tsiroy99} is more closely connected to Watkins' Q-learning algorithm \cite{wat89,watday92a}, than the TD($0$) algorithm. This is specifically due to a minimum term that appears in \eqref{e:Q0algOS} (see definition of $F$ in \eqref{e:DP}), similar to what appears in Q-learning. This is important to note, because \emph{Q-learning algorithms are known to have convergence issues under function approximation settings}, and this is due to the fact that the dynamic programming operator may not be a contraction in general \cite{bertsi96a}. The operator $F$ defined in \eqref{e:DP} is quite special in this sense: it can be shown that it is a contraction with respect to the $\pi$-norm \cite{tsiroy99}:
\[
\| F Q - F Q' \|_\pi \leq \beta \| Q - Q' \|_\pi, \quad \text{for all } Q,Q' \in L_2(\pi)
\]

Since \cite{tsiroy99}, many other algorithms have been proposed to improve the convergence rate. In \cite{choroy06} the authors propose a matrix gain variant of the algorithm presented in \cite{tsiroy99}, improving the rate of convergence in numerical studies. In \cite{yuber07}, the authors take a ``least squares" approach to solve the problem, and propose the \emph{least squares Q-learning} algorithm, that has close resemblance to the least squares policy evaluation algorithm (LSPE ($0$) of \cite{nedber03a}). The authors recognize the high computational complexity of the algorithm, and propose alternative variants. In prior works \cite{choroy06} and \cite{yuber07}, though a function-approximation setting is considered, the state-space is assumed finite.

More recently, in \cite{devmey17a, devmey17b}, the authors propose the Zap Q-learning algorithm to solve for a solution to a fixed point equation similar to (but more general than) \eqref{e:QOSBell}. The proof of convergence is provided only for the finite state-action space setting, and more restrictively, a tabular basis is assumed (wherein the $\psi_i$'s span all possible functions).

\subsection{Contributions}
We make the following contributions in this work:
\begin{romannum}
	\item We extend the convergence analysis of Zap-Q-learning of \cite{devmey17b} to the problem of optimal-stopping \eqref{e:opt_stop} in a linear function approximation setting (the authors consider only a `tabular' basis in \cite{devmey17b}) 
	\item The algorithm and analysis presented in this work is superior to previous works on optimal stopping \cite{tsiroy99,choroy06,yuber07} in two ways: Firstly, the analysis in previous works only concern a finite state-action space setting; more importantly, the algorithm we propose has \emph{optimal asymptotic variance}, implying better convergence rates (see \Section{s:main_results} for a discussion and \Section{s:num} for numerical results).
\end{romannum}

The extension of the work \cite{devmey17b} to the current setting  is \emph{not trivial}: The tabular case is much simpler to analyze with lots of special structures, and in general, the theory for convergence of any Q-learning algorithm in a function approximation setting does not exist. Furthermore, the ODE analysis obtained in this paper (cf. Theorem~\ref{t:ZapOS}) provides great insights into the behavior of the Zap-Q algorithm, even in a linear function approximation setting.

The remainder of the paper is organized as follows:   \Section{s:alg} contains the approximation architecture, and introduces the Zap-Q-learning algorithm. The assumptions and main results are contained in \Section{s:main_results}. 
\Section{s:proof} provides a high-level proof of the results,  numerical results are collected together in
\Section{s:num}, and conclusions in \Section{s:conc}.  Full proofs are available in the extended version of this paper,  available on arXiv \cite{chedevbusmey19b}.


\section{Q-learning for Optimal Stopping}
\label{s:alg}

\subsection{Notation}

The following notation is useful for the convergence analysis. For each $\theta \in \Re^d$, we denote $\phi^{\theta}: \state \to \{0,1\}$ to be the corresponding policy:
\begin{equation}
\phi^{\theta} (x) \eqdef \ind{\{\cstop(x) \leq Q^{\theta}(x) \}}
\label{e:phitheta}
\end{equation}
For any function $f$ with domain $\state$, the operators $\clS_\theta$ and $\clS^c_\theta$ are defined  as the simple products,
\begin{subequations} 
	\begin{align}
	\clS_\theta f (x) & \eqdef  \ind{\{Q^{\theta}(x) < \cstop(x)\}} f (x)
	\label{e:clStheta} 
	\\
	\clS_\theta^c f (x) & \eqdef  \ind{\{\cstop(x) \leq Q^{\theta}(x)\}} f (x) 
	\label{e:clSthetac}
	\end{align}
\end{subequations}
Observe that for each $x \in \state$, $\clS_\theta f (x) = (1-\phi^{\theta} (x) )f (x) $.   

The objective  \eqref{e:Q0algOS} can then be expressed:  
\begin{equation}
A(\theta^*) \theta^* + \beta \barcstop(\theta^*)  + b^* = 0 \,,
\label{e:Athetabtheta}
\end{equation}
where, for each $\theta \in \Re^d$, $A(\theta)$ is a $d\times d$ matrix, and $b^*$ and $\barcstop(\theta)$ are $d$-dimensional vectors:  
\begin{align}
\!\!\!\!
A(\theta) &\eqdef \Expect[ \psi(X_n) \beta \clS_{\theta} \psi^\transpose (X_{n+1}) - \psi(X_n) \psi^\transpose(X_n) ]
\label{e:Atheta} 
\\
\!\!\!
b^* & \eqdef \Expect[ \psi(X_n) c(X_n) ] 
\label{e:bdef} 
\\
\!\!\!\!
\barcstop(\theta) & \eqdef \Expect[ \psi(X_n) \clS_\theta^c  \cstop(X_{n+1}) ] 
\label{e:cbdef} 
\end{align}


\subsection{Zap Q-Learning}

Before we introduce our main algorithm, it is useful to first consider a more general class of ``matrix gain" Q-learning algorithms.
Given a $d \times d$ matrix gain sequence $\{G_n: n \geq 0\}$, and a scalar step-size sequence $\{ \alpha_n: n \geq 0 \}$, the corresponding \emph{matrix gain Q-learning algorithm} for optimal stopping is given by the following recursion:
\begin{equation}
\begin{aligned}
\theta_{n+1}  & = \theta_n + \alpha_{n+1}G_{n+1} \psi(X_n) \td_{n+1}
\end{aligned}
\label{e:matrix_gain_Q}
\end{equation} 
with $\{\td_{n}\}$ denoting the ``temporal difference" sequence:
\[
\td_{n+\!1} \eqdef c(X_n)  +  \beta \! \min(\cstop(X_{n +\!1}),Q^{\theta_n}(X_{n + \! 1}))  - Q^{\theta_n}(X_n)
\]

The algorithm proposed in \cite{tsiroy99} is \eqref{e:matrix_gain_Q}, with $G_n \equiv I$ (the $d \times d$ identity matrix). This is similar to the TD($0$) algorithm \cite{tsiroy97a, sut88}.

\ad{Need citation here? For ``standard stochastic approximation \\ "? Or is it OK?}
The \emph{fixed point Kalman filter} algorithm of \cite{choroy06} can also be written as a special case of \eqref{e:matrix_gain_Q}: We have $G_n \equiv [\widehat{\Sigma}^\psi_n]^{\dagger}$, where $M^\dagger$ denotes the pseudo-inverse of any matrix $M$, and $\widehat{\Sigma}^\psi_n$ is an estimate of the mean $\Sigma_\psi$ defined in \eqref{e:Psi}; The estimate can be recursively obtained using standard Monte-Carlo recursion:
\begin{equation}
\widehat{\Sigma}^\psi_{n + 1} = \widehat{\Sigma}^\psi_n + \alpha_{n+1} \big[  \psi(X_n) \psi^\transpose (X_n)  - \widehat{\Sigma}^\psi_n \big]
\label{e:kalmangain}
\end{equation}

In the Zap-Q algorithm, the matrix gain sequence $\{G_n\}$ is designed so that the \emph{asymptotic covariance} of the resulting algorithm is minimized (see \Section{s:main_results} for details). It uses a matrix gain $G_n = -\haA_{n+1}^{\dagger}$, with $\haA_{n+1}$ being an estimate of $A(\theta_n)$, and $A(\theta)$ defined in \eqref{e:Atheta}.
\ad{Does pseudo inverse need projection?}

The term inside the expectation in \eqref{e:Atheta}, following the substitution $\theta = \theta_n$, is denoted
\begin{equation}
A_{n+1}   \eqdef  \psi(X_n) \bigl[\beta \clS_{\theta_n} \psi (X_{n+1}) - \psi(X_n)  \bigr]^\transpose
\label{e:An}
\end{equation}
Using \eqref{e:An}, the matrix $A(\theta_n)$ is recursively estimated using stochastic approximation in the Zap-Q algorithm:
\begin{algorithm}[H]
	\caption{\em Zap-Q for Optimal Stopping}
	\label{a:ZapOptStop}
	\begin{algorithmic}[1]
		\INPUT Initial $\theta_0  \in\Re^d$, $\haA_0$:    $d\times d$, negative definite;  step-size sequences $\{\alpha_n\}$ and $\{\gamma_n\}$
		and $n=0$
		\Repeat
		\State 
		Obtain the \emph{Temporal Difference} term:
		{\small{
				\[
				\hspace{-0.1in}\td_{n+1} \!=\!  c(X_n)   + \beta  \min(\cstop(X_{n+1}),Q^{\theta_n}(X_{n+1}))     - Q^{\theta_n}(X_n)
				\] 
			}
		}
		\State
		Update the matrix gain estimate $\haA_{n}$ of $A(\theta_n)$, with $A_{n+1}$ defined in \eqref{e:An}:
		\begin{equation}
		\haA_{n+1} = \haA_n + \gamma_{n+1}  \bigl[  A_{n+1} - \haA_n  \bigr]
		\label{e:QSNR2OSAdef}
		\end{equation}
		\State
		Update the parameter vector:
		\begin{equation}
		\theta_{n+1}  = \theta_n -\alpha_{n+1}\haA_{n+1}^{\dagger}  \psi(X_n) \td_{n+1}
		\label{e:QSNR2OSdef}
		\end{equation}
		\State $n = n+1$
		\Until{$n \geq N$}
		\OUTPUT $\theta = \theta_N$
	\end{algorithmic}
\end{algorithm} 


\subsection{Discussion}

Algorithm~\ref{a:ZapOptStop} belongs to a general class of algorithms known as \emph{two-time-scale stochastic approximation} \cite{bor08a}: the recursion in \eqref{e:QSNR2OSdef} on the \emph{slower} time-scale intends to estimate the parameter vector $\theta^*$, and for each $n \geq 0$, the recursion \eqref{e:QSNR2OSAdef} on the \emph{faster} time-scale intends to estimate the mean $A(\theta_n)$. The step-size sequences $\{\alpha_n\}$ and $\{\gamma_n\}$ have to satisfy the standard requirements for separation of time-scales \cite{bor08a}: for any $\varrho \in (0.5,1)$, we choose
\begin{equation}
\label{e:GAINSos}
\alpha_n = {1}/ {n} \,, \qquad \gamma_n =  {1}/{n^\varrho} 
\end{equation}


%

For each $\theta\in\Re^d$, consider the following terms: 
\begin{subequations}
	\begin{align}
	b(\theta)& = - A(\theta) \theta - \beta \barcstop(\theta)
	\label{e:btheta}
	\\
	c^\theta(x) & =  Q^{\theta}(x)
	\label{e:ctheta}
	\\
	& \hspace{0.1in} -\Expect \big [\beta \min(c_s(X_{n + 1}), Q^{\theta}(X_{n + 1}))  \! \mid \! X_{n} = x \big]
	\nonumber
	\end{align}
	\label{e:b+ctheta}
\end{subequations}%
The vector $b(\theta)$ is analogous to $b^*$ in   \eqref{e:Athetabtheta},  and \eqref{e:ctheta} recalls the Bellman equation 
\eqref{e:QOSBell}. The following \Prop{t:projCost} is direct from these definitions.  It shows that  $b(\theta)$ is the ``projection'' of the cost function $c^\theta$, similar to how $b^*$ is related to $c$ through \eqref{e:bdef}.

\begin{proposition}
	\label{t:projCost}
	For each $\theta \in \Re^d$, we have:
	\begin{equation}
	b(\theta) = \Expect \big[ c^\theta(X_n) \psi(X_n) \big]
	\end{equation}
	where the expectation is in steady state.  In particular, 
	\[ b^*   = b(\theta^*) \]
\end{proposition}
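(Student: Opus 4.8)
The plan is to expand the definition \eqref{e:btheta} of $b(\theta)$ using the explicit forms \eqref{e:Atheta} and \eqref{e:cbdef} of $A(\theta)$ and $\barcstop(\theta)$, and then to recognize the result as the steady-state average of $\psi\cdot c^\theta$. First I would substitute, using the linear parameterization \eqref{e:Qtheta_linear} to write $\psi^\transpose(X_n)\theta = Q^\theta(X_n)$, and using \eqref{e:clStheta} to write $\beta\,\clS_\theta\psi^\transpose(X_{n+1})\,\theta = \beta\,\ind\{Q^\theta(X_{n+1}) < \cstop(X_{n+1})\}\,Q^\theta(X_{n+1})$. Combined with the term $\beta\barcstop(\theta)$, which by \eqref{e:cbdef} and \eqref{e:clSthetac} carries the factor $\ind\{\cstop(X_{n+1}) \le Q^\theta(X_{n+1})\}\,\cstop(X_{n+1})$, this expresses $b(\theta)$ as $\Expect[\psi(X_n)Q^\theta(X_n)]$ minus a $\beta$-weighted expectation of $\psi(X_n)$ against the sum of these two indicator terms.

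The crux of the argument is then the elementary identity
\[
\ind\{Q^\theta(y) < \cstop(y)\}\,Q^\theta(y) + \ind\{\cstop(y) \le Q^\theta(y)\}\,\cstop(y) = \min\big(\cstop(y), Q^\theta(y)\big),
\]
valid for every $y\in\state$ because the two indicators partition $\state$ into the regions where each argument attains the minimum. Substituting this collapses the bracketed term and yields
\[
b(\theta) = \Expect[\psi(X_n)Q^\theta(X_n)] - \beta\,\Expect\big[\psi(X_n)\min(\cstop(X_{n+1}), Q^\theta(X_{n+1}))\big].
\]
Since $\psi(X_n)$ is $\clF_n$-measurable, the Markov property lets me condition the second expectation on $X_n$, recovering exactly the conditional expectation in the definition \eqref{e:ctheta} of $c^\theta$ (the factor $\beta$ moving inside by linearity); hence $b(\theta) = \Expect[c^\theta(X_n)\psi(X_n)]$, which is the claim.

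For the final assertion $b^* = b(\theta^*)$, I would simply compare the defining relation \eqref{e:Athetabtheta}, which gives $b^* = -A(\theta^*)\theta^* - \beta\barcstop(\theta^*)$, with the definition \eqref{e:btheta} evaluated at $\theta^*$; these coincide. Equivalently, one can note that $c^\theta = c - \clB_\clE^\theta$, so that $\Expect[c^{\theta^*}(X_n)\psi(X_n)] = b^* - \Expect[\clB_\clE^{\theta^*}(X_n)\psi(X_n)]$, and the second term vanishes by the Galerkin condition \eqref{e:eligQalgOS}. The computation is essentially bookkeeping; the only point requiring care is the partition identity above together with the correct use of the tower property, so I do not anticipate a substantive obstacle.
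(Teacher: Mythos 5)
Your proposal is correct and is precisely the computation the paper has in mind when it says the proposition is ``direct from these definitions'': unwinding \eqref{e:Atheta} and \eqref{e:cbdef}, recombining the two indicator terms into $\min(\cstop,Q^\theta)$, and applying the tower property to match \eqref{e:ctheta}. The final identity $b^*=b(\theta^*)$ follows immediately from comparing \eqref{e:Athetabtheta} with \eqref{e:btheta}, exactly as you note.
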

\qed




\section{Assumptions and Main Results}
\label{s:main_results}

\subsection{Preliminaries}

We first summarize preliminary results here that will be used to establish the main results in the following sections. The proofs of all the technical results are contained in the Appendix of \cite{chedevbusmey19b}. 

We start with the contraction property of the operator $F$ defined in \eqref{e:DP}. The following is a result directly obtained from \cite{tsiroy99} (see \cite[Lemma $4$ on p. $1844$]{tsiroy99}).
\begin{lemma}
	\label{t:DPcont}
	The dynamic programming operator $F$ defined in \eqref{e:DP} satisfies:
	\[
	\| F Q - F Q' \| \leq \beta \| Q - Q' \|, \qquad Q,Q' \in L_2(\pi).
	\]
	Furthermore, $Q^*$ is the unique fixed point of $F$ in $L_2(\pi)$. 
	\qed
\end{lemma}

Recall that $Q^\theta: \state \to \Re$ is defined in \eqref{e:Qtheta_linear}. Similar to the operator $F$, for each $\theta \in \Re^d$ we define operators $H^\theta$ and $F^\theta$ that operate on functions $Q: \state \to \Re$ as follows:
\begin{eqnarray}
H^\theta Q (x) & = &
\begin{cases}
Q(x), \qquad \text{if} \quad Q^\theta (x) < \cstop(x)  \\
\cstop(x), \qquad \text{otherwise}
\end{cases}
\\
F^\theta Q & = &  c + \beta P H^\theta Q.
\label{e:Ftheta}
\end{eqnarray}

The following Lemma is a slight extension of \Lemma{t:DPcont}. 
\vspace{0.1in}
\begin{lemma}
	\label{t:Fthetacont}
	For each $\theta \in \Re^d$, the operator $F^\theta$ satisfies:
	\[
	\| F^\theta Q - F^\theta Q' \| \leq \beta \| Q - Q' \|, \qquad \,\, Q,Q' \in L_2(\pi)
	\]
	\qed
\end{lemma}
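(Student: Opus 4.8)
The plan is to follow the same template as \Lemma{t:DPcont}, isolating the two ingredients that force $F^\theta$ to be a $\beta$-contraction on $L_2(\pi)$: the transition kernel $P$ is a nonexpansion on $L_2(\pi)$, and the selection operator $H^\theta$ is pointwise nonexpansive. Since $F^\theta Q = c + \beta P H^\theta Q$ by \eqref{e:Ftheta}, the cost term $c$ cancels in any difference, leaving
\[
\| F^\theta Q - F^\theta Q' \|_\pi = \beta \, \| P(H^\theta Q - H^\theta Q') \|_\pi .
\]
Everything then reduces to bounding the right-hand side.

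First I would establish the pointwise estimate $|H^\theta Q(x) - H^\theta Q'(x)| \le |Q(x) - Q'(x)|$ for every $x \in \state$, directly from the case split in the definition of $H^\theta$. On the stopping region $\{x : \cstop(x) \le Q^\theta(x)\}$ both $H^\theta Q(x)$ and $H^\theta Q'(x)$ equal $\cstop(x)$, so the left side is $0$; on the continuation region $\{x : Q^\theta(x) < \cstop(x)\}$ we have $H^\theta Q(x) = Q(x)$ and $H^\theta Q'(x) = Q'(x)$, so the two sides agree exactly. The decisive point is that this region is fixed by $\theta$ and does \emph{not} depend on the argument functions $Q, Q'$, which is precisely what makes $H^\theta$ cleaner to handle than the $\min$ appearing in $F$. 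Squaring and integrating against $\pi$ yields $\| H^\theta Q - H^\theta Q' \|_\pi \le \| Q - Q' \|_\pi$.

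Next I would invoke the standard fact that $P$ is a nonexpansion on $L_2(\pi)$ when $\pi$ is invariant: for any $f \in L_2(\pi)$, Jensen's inequality applied to the probability measure $P(x,\cdot)$ gives $(Pf(x))^2 \le P(f^2)(x)$ pointwise, and integrating against $\pi$ together with invariance, $\int P(f^2)\,d\pi = \int f^2 \, d\pi$, gives $\| Pf \|_\pi \le \| f \|_\pi$. Applying this with $f = H^\theta Q - H^\theta Q'$ and chaining with the previous step completes the bound:
\[
\| F^\theta Q - F^\theta Q' \|_\pi \le \beta \, \| H^\theta Q - H^\theta Q' \|_\pi \le \beta \, \| Q - Q' \|_\pi .
\]

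I do not anticipate a genuine obstacle. The only point needing care is the verification that the partition into stopping and continuation regions is determined solely by $\theta$, so that $H^\theta$ acts as a fixed pointwise identity/constant selector and the case analysis delivers a true pointwise nonexpansion; this is exactly the simplification over \Lemma{t:DPcont}, whose analogous step instead rests on the elementary inequality $|\min(a,c) - \min(b,c)| \le |a - b|$.
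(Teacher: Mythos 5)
Your proposal is correct and follows essentially the same route as the paper's proof: cancel the cost term, use that $P$ is a nonexpansion on $L_2(\pi)$, and observe that $H^\theta Q - H^\theta Q' = \clS_\theta(Q - Q')$ pointwise (your case split on the stopping/continuation regions is exactly this identity), giving the $\beta$-contraction. The only difference is that you spell out the Jensen/invariance argument for $\|P\|_\pi \le 1$, which the paper simply cites as a known fact.
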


The next result is a direct consequence of \Lemma{t:Fthetacont}, and establishes the inveritbility of the matrix $A(\theta)$ for any $\theta \in \Re^d$:
\begin{lemma}
	\label{t:AthetaNegEig}
	For each $\theta \in \Re^d$,
	\begin{enumerate}[(i)]
		\item The $d \times d$ matrix $A(\theta)$ defined in \eqref{e:Atheta} satisfies:
		\begin{equation}
		-
		v^\transpose A(\theta) v \ge  (1-\beta) v^\transpose \Sigma_\psi v, 
		\label{e:AthetaNegEig}
		\end{equation}
		for each $v \in \Re^d$, with $\Sigma_\psi$ defined in \eqref{e:Psi}.
		\item Eigenvalues of $A(\theta)$ are strictly bounded away from 0, and $\{A^{-1}(\theta) \! : \! \theta \! \in \! \Re ^d\}$ is uniformly bounded.
	\end{enumerate}
	\qed
\end{lemma}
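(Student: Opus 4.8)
The plan is to prove (i) by turning the quadratic form $v^\transpose A(\theta)v$ into an $L_2(\pi)$ inner-product identity, bound the resulting cross term with the contraction Lemma~\ref{t:Fthetacont}, and then deduce (ii) from (i) together with positive-definiteness of $\Sigma_\psi$. First I would fix $v\in\Re^d$ and introduce the associated linear function $Q=Q^v=v^\transpose\psi$. Starting from the definition \eqref{e:Atheta}, note that $(\clS_\theta\psi(X_{n+1}))^\transpose v = \clS_\theta Q(X_{n+1})$ and $\psi(X_n)^\transpose v = Q(X_n)$, so
\[
v^\transpose A(\theta)v = \Expect\big[\beta Q(X_n)\,\clS_\theta Q(X_{n+1})\big]-\Expect\big[Q(X_n)^2\big].
\]
Applying the tower property and the Markov property to the first expectation (so that $\Expect[\clS_\theta Q(X_{n+1})\mid X_n]=(P\clS_\theta Q)(X_n)$), and recalling $\Expect[Q(X_n)^2]=\|Q\|_\pi^2=v^\transpose\Sigma_\psi v$, I obtain the identity
\[
v^\transpose A(\theta)v = \langle Q,\,\beta P\clS_\theta Q\rangle_\pi-\|Q\|_\pi^2 .
\]

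The key step is to bound $\langle Q,\beta P\clS_\theta Q\rangle_\pi$. Writing $H^\theta Q=\clS_\theta Q+\clS_\theta^c\cstop$ shows that, for fixed $\theta$, the operator $F^\theta$ of \eqref{e:Ftheta} is affine in $Q$ with linear part $\clL^\theta Q\eqdef\beta P\clS_\theta Q$ (the $\cstop$-term is constant). Hence $F^\theta Q-F^\theta Q'=\clL^\theta(Q-Q')$, and Lemma~\ref{t:Fthetacont} gives $\|\beta P\clS_\theta(Q-Q')\|_\pi\le\beta\|Q-Q'\|_\pi$; taking $Q'=0$ yields the operator bound $\|\beta P\clS_\theta Q\|_\pi\le\beta\|Q\|_\pi$. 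Cauchy--Schwarz then gives $\langle Q,\beta P\clS_\theta Q\rangle_\pi\le\beta\|Q\|_\pi^2$, so that
\[
v^\transpose A(\theta)v\le-(1-\beta)\|Q\|_\pi^2=-(1-\beta)\,v^\transpose\Sigma_\psi v ,
\]
which is exactly \eqref{e:AthetaNegEig}, proving (i).

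For (ii), since $\Sigma_\psi$ is full rank it is positive definite, so $v^\transpose\Sigma_\psi v\ge\lambda_{\min}(\Sigma_\psi)\|v\|^2$ with $\lambda_{\min}(\Sigma_\psi)>0$. Setting $c_0\eqdef(1-\beta)\lambda_{\min}(\Sigma_\psi)>0$, part (i) gives $v^\transpose(-A(\theta))v\ge c_0\|v\|^2$ for all $v$ and all $\theta$, i.e.\ the symmetric part of $-A(\theta)$ is bounded below by $c_0 I$ uniformly in $\theta$. For any (complex) eigenvalue $\lambda$ of $A(\theta)$ with unit eigenvector $w$, taking real parts of $w^*A(\theta)w=\lambda$ gives $\Real(\lambda)=w^*\tfrac12(A(\theta)+A(\theta)^\transpose)w\le-c_0$, so every eigenvalue lies in $\{\Real(z)\le-c_0\}$ and is bounded away from $0$; in particular $A(\theta)$ is invertible. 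Finally, $c_0\|v\|^2\le-v^\transpose A(\theta)v\le\|v\|\,\|A(\theta)v\|$ gives $\|A(\theta)v\|\ge c_0\|v\|$, and substituting $v=A^{-1}(\theta)w$ yields $\|A^{-1}(\theta)\|\le 1/c_0$ uniformly in $\theta$.

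The hard part will be the second paragraph: correctly identifying the linear part $\clL^\theta$ of the affine operator $F^\theta$ and arguing that the contraction estimate of Lemma~\ref{t:Fthetacont}, which a priori only controls the differences $F^\theta Q-F^\theta Q'$, descends to a genuine operator-norm bound $\|\beta P\clS_\theta\|\le\beta$ on that linear part. Once this is in place, both claims follow from elementary quadratic-form and eigenvalue estimates.
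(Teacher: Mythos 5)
Your proof is correct. For part (i) it is essentially the paper's argument in different clothing: the paper decomposes $A(\theta)=A_{NL}(\theta)-A_L$, writes $v=\theta^1-\theta^2$, and bounds $v^\transpose A_{NL}(\theta)v=\beta\langle Q^{\theta^1}-Q^{\theta^2},\,PH^\theta(Q^{\theta^1}-Q^{\theta^2})\rangle_\pi$ by Cauchy--Schwarz together with $\|P\|_\pi\le 1$ and $H^\theta Q-H^\theta Q'=\clS_\theta(Q-Q')$; your identity $v^\transpose A(\theta)v=\langle Q,\beta P\clS_\theta Q\rangle_\pi-\|Q\|_\pi^2$ with $Q=v^\transpose\psi$ and the operator bound $\|\beta P\clS_\theta Q\|_\pi\le\beta\|Q\|_\pi$ is the same estimate, and your extraction of the linear part of $F^\theta$ from Lemma~\ref{t:Fthetacont} is sound (one can also see it directly, since $\clS_\theta$ is multiplication by an indicator). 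Where you genuinely diverge is part (ii). The paper first bounds the real parts of the eigenvalues exactly as you do, but then establishes uniform boundedness of $\{A^{-1}(\theta)\}$ via a determinant-based Frobenius-norm inequality from \cite{cheng2014note}, which additionally requires a uniform bound on $\|A(\theta)\|_F$ obtained from compactness of $\state$. Your route — coercivity $-v^\transpose A(\theta)v\ge c_0\|v\|^2$ with $c_0=(1-\beta)\lambda_{\min}(\Sigma_\psi)$, hence $\|A(\theta)v\|\ge c_0\|v\|$ and $\|A^{-1}(\theta)\|\le 1/c_0$ — is more elementary, needs no bound on $\|A(\theta)\|$ itself, and gives an explicit uniform constant. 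It also quietly fixes a slip in the paper's appendix, which bounds $\Real\{\lambda_A\}$ by $(\beta-1)\lambda_\psi$ with $\lambda_\psi$ the \emph{largest} eigenvalue of $\Sigma_\psi$; since $\beta-1<0$, the correct bound uses the smallest eigenvalue, as you have it.
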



\Prop{t:projCost} implies   a Lipschitz bound on the function  $b$ defined in \eqref{e:btheta}:
\begin{lemma}
	\label{t:bthetaLip}
	The mapping $b$ is Lipschitz:
	For some $\ell_1 > 0$, and each $\theta^1, \theta^2 \in \Re^d$,
	\[
	\| b(\theta^1) - b(\theta^2) \|  \leq \ell_1 \|\theta^1 - \theta^2 \|
	\]
	\qed
\end{lemma}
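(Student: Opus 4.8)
The plan is to establish the Lipschitz bound directly from the representation $b(\theta) = \Expect[c^\theta(X_n)\psi(X_n)]$ furnished by \Prop{t:projCost}, rather than from the defining formula \eqref{e:btheta}; the projection form is more convenient because it isolates the only source of $\theta$-dependence into the scalar function $c^\theta$. First I would write, for $\theta^1,\theta^2 \in \Re^d$,
\[
b(\theta^1) - b(\theta^2) = \Expect\big[(c^{\theta^1}(X_n) - c^{\theta^2}(X_n))\,\psi(X_n)\big],
\]
and apply Cauchy--Schwarz componentwise together with the assumption $\psi_i \in L_2(\pi)$ to reduce the claim to a bound of the form $\|c^{\theta^1} - c^{\theta^2}\|_\pi \le \ell \,\|\theta^1 - \theta^2\|$ for some constant. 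Concretely, $\|b(\theta^1)-b(\theta^2)\| \le \big(\sum_i \|\psi_i\|_\pi^2\big)^{1/2}\,\|c^{\theta^1}-c^{\theta^2}\|_\pi$, so everything rests on controlling the scalar difference.

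Next I would estimate $c^{\theta^1} - c^{\theta^2}$ using its definition \eqref{e:ctheta}. The difference splits into two pieces: the linear term $Q^{\theta^1} - Q^{\theta^2} = (\theta^1-\theta^2)^\transpose \psi$, which is manifestly Lipschitz in $\theta$ with constant governed by $\|\psi\|_\pi$ (equivalently by $\Sigma^\psi$); and the term involving $\Expect[\beta\min(\cstop(X_{n+1}), Q^{\theta}(X_{n+1})) \mid X_n = x]$. For the latter, the key observation is the elementary pointwise inequality $|\min(a,s) - \min(a,t)| \le |s-t|$, which gives
\[
\big|\min(\cstop(y),Q^{\theta^1}(y)) - \min(\cstop(y),Q^{\theta^2}(y))\big| \le |Q^{\theta^1}(y) - Q^{\theta^2}(y)| = |(\theta^1-\theta^2)^\transpose\psi(y)|.
\]
Because $P$ is an $L_2(\pi)$-contraction under uniform ergodicity (the conditional expectation is a nonexpansive averaging operator, and $\pi$ is invariant), applying the conditional expectation and then the $\pi$-norm does not inflate this bound. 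Collecting the two pieces yields $\|c^{\theta^1}-c^{\theta^2}\|_\pi \le (1+\beta)\,\|(\theta^1-\theta^2)^\transpose\psi\|_\pi \le (1+\beta)\,\lambda_{\max}(\Sigma^\psi)^{1/2}\,\|\theta^1-\theta^2\|$, and combining with the Cauchy--Schwarz step gives the desired $\ell_1$ with an explicit value in terms of $\beta$ and the spectral data of $\Sigma^\psi$.

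The main obstacle is handling the nonsmooth $\min$ term cleanly: the indicator switching in $\clS_\theta, \clS_\theta^c$ makes $c^\theta$ nondifferentiable in $\theta$, so one cannot simply differentiate and bound a gradient. The $1$-Lipschitz property of $t\mapsto\min(a,t)$ is precisely what circumvents this, absorbing the discontinuity of the policy $\phi^\theta$ into a uniformly controlled bound without ever needing to track where the region $\{Q^\theta < \cstop\}$ changes. I would take care to justify that passing to $\Expect[\,\cdot\mid X_n = x]$ and then to the steady-state $\pi$-norm is genuinely nonexpansive (this uses invariance of $\pi$ and Jensen's inequality), since that is the one place where the Markov structure, rather than pure pointwise algebra, enters the argument.
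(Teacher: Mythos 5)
Your proposal is correct and follows essentially the same route as the paper: both start from the representation $b(\theta^1)-b(\theta^2)=\Expect\big[(c^{\theta^1}-c^{\theta^2})(X_n)\,\psi(X_n)\big]$, split off the linear term $Q^{\theta^1}-Q^{\theta^2}$, and control the remaining term by the $\beta$-contraction of the dynamic programming operator, yielding the same constant of the form $(1+\beta)$ times spectral data of $\Sigma^\psi$. The only difference is cosmetic --- you re-derive the contraction inline from $|\min(a,s)-\min(a,t)|\le|s-t|$ and $\|P\|_\pi\le 1$, whereas the paper invokes Lemma~\ref{t:DPcont}, whose proof is exactly that argument.
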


\subsection{Assumptions \& Main Result}

The following assumptions are made throughout:

\assume{Assumption A1:}  $\bfmX$ is a uniformly ergodic Markov chain on the compact state space $\state$, with a unique invariant probability measure, $\pi$ (cf. \eqref{e:unierg}).  

\assume{Assumption A2:}  There exists a unique solution $\theta^*$ to the objective \eqref{e:Q0algOS}. 

\assume{Assumption A3.1:}  
The conditional distribution of $\psi(X_{n+1})$ given $X_n = x$ has a density, $f_{\psi \mid x}(z)$. This density is also assumed to have uniformly bounded likelihood ratio $\ell(z \mid x )$ with respect to the Gaussian density $\mathcal N(\psi(x), I)$.  

\assume{Assumption A3.2:}  
It is assumed moreover that the  function $c_s-1$ is   in the span of $\{\psi_i\}$.


\assume{Assumption A4:}
The parameter sequence 
$\{\theta_n:n\geq 1\}$ is bounded {\em a.s.}.

Assumption {\textbf{A3}} consists of technical conditions required for the proof of convergence. The density assumption is imposed to ensure that the conditional expectation given $X_n$ of functions such as $\clS_{\theta} \psi^\transpose (X_{n+1}) $ are smooth as a function of $\theta$.  Furthermore, it implies that   $\Sigma_\psi$ is positive definite.

Assumption {\textbf{A4}} is a standard assumption in much of the recent stochastic approximation literature.
We conjecture that the boundedness can be established via  an extension of the results in \cite{bormey00a, bor08a}. The ``ODE at infinity'' posed there is stable as required, but the extension of the results to the current setting of two time-scale stochastic approximation with Markovian noise is the only challenge.  
\spm{did we look?  I bet he has an extension!}
\ad{\rd{Sep. 2019: We skimmed through, and could not find anything relevant. They either assume boundedness to prove convergence, or they assume MDS noise. A4 seems to be a common assumption in Bhatnagar's works.}}


The main result of this paper establishes convergence of iterates $\{\theta_n\}$ obtained using Algorithm~\ref{a:ZapOptStop}:
\begin{theorem}
	\label{t:ZapOS}
	Suppose that Assumptions A1-A4 hold. Then,
	\begin{romannum}
		\item The parameter sequence $\{\theta_n\}$ obtained using the Zap-Q algorithm converges to $\theta^*$ a.s., where $\theta^*$ satisfies \eqref{e:Q0algOS}.
		\item An ODE approximation holds for the sequences $\{\theta_n,b(\theta_n)\}$ by continuous time functions $(\bfmw,\bfmb)$ satisfying
		\begin{equation}
		\begin{aligned}
		\ddt b(t) & = - b(t) + b \\
		b(t) & = - A(w(t)) w(t) - \beta \barcstop(w(t))
		\end{aligned}
		\label{e:ZapODEOS}
		\end{equation}
		\qed
	\end{romannum}
\end{theorem}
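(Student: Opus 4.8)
The plan is to regard Algorithm~\ref{a:ZapOptStop} as a two-time-scale stochastic approximation recursion and to apply the ODE method in the style of \cite{bor08a}. Since $\gamma_n = 1/n^\varrho$ with $\varrho<1$ dominates $\alpha_n = 1/n$, the matrix iterates $\{\haA_n\}$ of \eqref{e:QSNR2OSAdef} evolve on the fast scale and $\{\theta_n\}$ of \eqref{e:QSNR2OSdef} on the slow scale. I would identify two ODEs: the fast ODE $\frac{d}{dt}\haA = A(w) - \haA$ with $w$ held frozen, whose globally asymptotically stable equilibrium is $\haA = A(w)$; and the slow ODE obtained by substituting this equilibrium into the $\theta$-recursion, which I will show reduces to \eqref{e:ZapODEOS}.

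First I would establish the fast-scale tracking property $\haA_n - A(\theta_n) \to 0$ a.s. On the $\gamma$-scale $\theta_n$ is quasi-static, and the driving term $A_{n+1}$ of \eqref{e:An} has steady-state conditional mean $A(\theta_n)$, so the linear contracting ODE $\frac{d}{dt}\haA = A(w)-\haA$ governs the fast iterates. By \Lemma{t:AthetaNegEig}(ii) the limit $A(\theta_n)$ is uniformly invertible with eigenvalues bounded away from $0$, so the pseudo-inverse $\haA_{n+1}^\dagger$ coincides with $A(\theta_n)^{-1}$ for large $n$ and may be substituted asymptotically in \eqref{e:QSNR2OSdef}.

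The slow recursion then behaves as $\theta_{n+1} = \theta_n - \alpha_{n+1}A(\theta_n)^{-1}\psi(X_n)\td_{n+1} + o(\alpha_{n+1})$. Using $\min(\cstop,Q^{\theta}) = \clS_\theta Q^{\theta} + \clS_\theta^c \cstop$ together with \eqref{e:Atheta}--\eqref{e:cbdef}, a steady-state computation gives $\Expect[\psi(X_n)\td_{n+1}] = b^* - b(\theta)$ with $b(\theta)$ as in \eqref{e:btheta}, so the slow mean field is $\dot w = A(w)^{-1}(b(w) - b^*)$. The crucial step is the Jacobian identity $\nabla_\theta b(\theta) = -A(\theta)$: differentiating $\min(\cstop(y),Q^{\theta}(y))$ in $\theta$ yields $\clS_\theta\psi(y)$ off the boundary $\{Q^{\theta}=\cstop\}$, and Assumption A3.1 makes the conditional expectation over $X_{n+1}$ smooth, so the (null) boundary contributes nothing. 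The chain rule applied to $b(t):=b(w(t))$ then gives $\frac{d}{dt}b(t) = \nabla b(w)\,\dot w = -A(w)A(w)^{-1}(b(w)-b^*) = -(b(t)-b^*)$, which is exactly \eqref{e:ZapODEOS} and proves part (ii). This linear ODE admits the strict Lyapunov function $V(w)=\half\|b(w)-b^*\|^2$ with $\dot V = -2V$, so $b(w(t))\to b^*$ globally; by Assumption A2 the only $\theta$ with $b(\theta)=b^*$ is $\theta^*$ (via \eqref{e:Athetabtheta}), yielding a unique globally asymptotically stable equilibrium. The ODE method combined with the boundedness hypothesis A4 then gives $\theta_n\to\theta^*$ a.s., which is part (i).

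The main obstacle is the interaction of the two time-scales with Markovian rather than martingale-difference noise: the increments in both \eqref{e:QSNR2OSAdef} and \eqref{e:QSNR2OSdef} are driven by the ergodic chain $\bfmX$, so the averaging justifying each ODE must be carried out through a Poisson-equation / solidarity argument under the uniform ergodicity \eqref{e:unierg}, while simultaneously controlling the coupling error $\haA_n - A(\theta_n)$. A closely related difficulty is regularity: the fields $A(\cdot)$ and $b(\cdot)$ are assembled from discontinuous indicators, and it is only after conditioning on $X_n$ and invoking the density/bounded-likelihood-ratio hypothesis A3.1 (with A3.2 fixing the geometry of the stopping boundary) that they become Lipschitz (\Lemma{t:bthetaLip}) and differentiable with the clean Jacobian $-A$. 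These regularity facts are precisely what make both the fast-tracking step and the Lyapunov argument rigorous.
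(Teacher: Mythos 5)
Your proposal is correct and follows the same skeleton as the paper's proof: ODE-friendly (Poisson-equation) decomposition of the Markovian noise, fast-scale tracking $\haA_n - A(\theta_n)\to 0$ so that $\haA_{n+1}^\dagger$ may be replaced by $A(\theta_n)^{-1}$ (\Lemma{t:odegain}), the slow mean-field ODE $\ddt w = A^{-1}(w)\bigl(b(w)-b^*\bigr)$ (\Lemma{t:odetheta}), and then the change of variables to $b_t=b(w_t)$ giving the linear ODE $\ddt b_t = -b_t+b^*$, whose exponential stability together with A2 and the uniform bound on $A^{-1}(\theta)$ yields $w(t)\to\theta^*$ and hence part (i).

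The one place you genuinely diverge is the key identity $\ddt b(w(t)) = -A(w(t))\,\ddt w(t)$. You obtain it by asserting the global Jacobian identity $\nabla_\theta b(\theta) = -A(\theta)$ (differentiating the indicator inside the expectation, with A3.1 killing the boundary set) and applying the chain rule. The paper's \Lemma{t:b_ode} deliberately avoids claiming differentiability of $b$ as a function on $\Re^d$: it only establishes the derivative of $t\mapsto b_t$ for a.e.\ $t$ along the limiting trajectory, via a two-sided envelope inequality exploiting $\min(\cstop,Q^\theta) = \clS_\theta Q^\theta + \clS_\theta^c\cstop \le \clS_{\theta'}Q^\theta + \clS_{\theta'}^c\cstop$ for any $\theta'$, which sandwiches the linearization from both sides of $t_0$ and forces equality at first order. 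The two arguments are the same mathematics (a Danskin/envelope computation for the concave map $\theta\mapsto\Expect[\min(\cstop,Q^\theta)]$), and under A3.1--A3.2 your version is justifiable; but if you take your route you owe an explicit argument that differentiation and conditional expectation commute for the discontinuous integrand (the paper's \Lemma{t:Econtinuous} only delivers Lipschitz continuity of $\theta\mapsto\Expect[\clS_\theta\psi(X_{n+1})\mid X_n=x]$, not differentiability). The paper's weaker, trajectory-wise statement sidesteps that entirely, which is what it buys; your version, once justified, is cleaner and gives the Jacobian globally. Everything else in your outline matches the paper's lemmas essentially one for one.
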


The term \textit{ODE approximation} is standard  in the SA literature:
For $t \geq s$, let $w^s(t)$ denote the solution to: 
\begin{equation}
\ddt w^s(t) = \xi(w^s(t)), \quad w^s(s) = \barw(s)
\label{e:ws}
\end{equation}
for some $\xi \colon \Re^d \to \Re^d$, and  $\barw(t)$ denoting the continuous time process constructed from the sequence  $\{\theta_n:n\geq 0\}$ via linear interpolation.
We say that the ODE approximation
$
\ddt w(t) = \xi (w(t))
$
holds for the sequence $\{\theta_n:n\geq 1\}$, if the following is true for any $T > 0$:
\begin{equation*}
\lim_{s\rightarrow\infty}\sup_{t\in[s, s+T]}\|\barw (t) -w^s(t)\| =0,\: a.s.
\label{e:barw_ODE}
\end{equation*}
Details are made precise in \Section{s:ODEOS}.
The optimality of the algorithm in terms of the asymptotic variance is discussed next.


\subsection{Asymptotic Variance}
\label{s:asym_var}


\def\SigmaTheta{\Sigma_{\Theta}}


The asymptotic covariance $\SigmaTheta$ of any algorithm is defined to be the following limit:
\begin{equation}
\SigmaTheta \eqdef \lim_{n\to\infty } n \Expect \big [ \big (\theta_n - \theta^* \big ) \big (\theta_n - \theta^* \big )^{\transpose} \big ]
\label{e:Sigma_theta}
\end{equation}

Consider the matrix gain Q-learning algorithm \eqref{e:matrix_gain_Q}, and suppose the matrix sequence $\{G_n: n \geq 0\}$ is constant:  $G_n\equiv G$.  
Also, suppose that all eigenvalues of $GA(\theta^*)$ satisfy  $\lambda \big( G A(\theta^*) \big) < - \half$. Following standard analysis (see Section 2.2 of \cite{devmey17a} and references therein), it can be shown that, under general assumptions, the asymptotic covariance of the algorithm \eqref{e:matrix_gain_Q} can be obtained as a solution to the Lyapunov equation:
\begin{equation}
\begin{aligned}
\big( G A(\theta^*) + \half I \big) \SigmaTheta & + \SigmaTheta \big( G A(\theta^*) + \half I \big)^\transpose 
\\
&\hspace{0.6in} +G \Sigma_\clE G^\transpose = 0
\end{aligned}
\label{e:gain_lyap}
\end{equation}
where $\Sigma_\clE$ is the ``noise covariance matrix", that is defined as follows.

A ``noise sequence'' $\{\clE_n\}$ is defined as 
\begin{equation}
\clE_{n} \eqdef \tilA_{n+1} \theta^* + \tilb_{n+1} + \tilA_{n+1} \tiltheta_n
\label{e:Deltan}
\end{equation}
where $\tilA_{n+1} \eqdef A_{n+1} - A(\theta^*)$, $\tilb_{n+1} \eqdef b_{n+1} - b^*$, \\ $\tiltheta_n \eqdef \theta_n - \theta^*$, with $A_{n+1}$ defined in \eqref{e:An}, $A(\theta)$ defined in \eqref{e:Atheta},
\begin{equation}
b_{n+1} \eqdef  \psi(X_n) \big[c(X_n) +  \clS^c_{\theta_n} \cstop(X_{n+1}) \big]
\label{e:bn}
\end{equation}
and $b^*$ defined in \eqref{e:bdef}.   
The noise covariance matrix is then defined as the limit\begin{equation}
\Sigma_\clE = \lim_{T \to \infty} \frac{1}{T} \Expect [S_T S_T^{\transpose}] 
\label{e:Sigma_Delta}
\end{equation}
in which $S_T = \sum_{n = 1}^{T} \clE_n$, and the expectation is in steady state.

\subsubsection*{\textbf{Optimality of the asymptotic covariance}}

The asymptotic covariance $\SigmaTheta$ can be obtained as a solution to \eqref{e:gain_lyap} only when all eigenvalues satisfy  $\lambda ( G A(\theta^*) ) < - \half$. If there exists at least one eigenvalue such that $\lambda ( G A(\theta^*) ) \geq - \half$, then, under general conditions, it can be shown that the asymptotic covariance is not finite \cite{devmey17a}.
This implies that the rate of convergence of $\theta_n$ is  \emph{slower} than $O(1/\sqrt{n})$. 

It is possible to optimize the covariance $\SigmaTheta$ over all matrix gains $G$ using \eqref{e:gain_lyap}. Specifically, it can be shown that letting $G^* = -A(\theta^*)$ will result in the \emph{minimum} asymptotic covariance $\Sigma^*$, where
\begin{equation}
\Sigma^* = A(\theta^*)^{-1} \Sigma_\clE \big( A(\theta^*)^{-1} \big)^\transpose
\label{e:SigmaOpt}
\end{equation}
That is, for any other gain $G_n\equiv G$, denoting $\SigmaTheta^G$ to be the asymptotic covariance of the algorithm \eqref{e:matrix_gain_Q} obtained as a solution to the Lyapunov equation \eqref{e:gain_lyap}, the difference $\SigmaTheta^G \! - \! \Sigma^*$ is positive semi-definite. This is specifically true for the algorithms proposed in \cite{tsiroy99} and \cite{choroy06}.

The Zap~Q algorithm is specifically designed to achieve the optimal asymptotic covariance.   A full proof of optimality will require extra effort.  
\Theorem{t:ZapOS} tells us that we have the required convergence  $G_n\to G^*$ for this algorithm.    
Provided we can obtain additional tightness bounds for the scaled error $\sqrt{n}\tiltheta_n$, we obtain a functional Central Limit Theorem with optimal covariance $\Sigma^*$ \cite{bor08a}.    Minor additional bounds ensure convergence of \eqref{e:Sigma_theta} to the optimal covariance $\Sigma^*$.

%
%
%
%
%
The next section is dedicated to the proof of \Theorem{t:ZapOS}.

\section{Proof of Theorem~\ref{t:ZapOS}}
\label{s:proof}


\subsection{Overview of the Proof}
\label{s:overview}

Unlike martingale difference assumptions in standard stochastic approximation, the noise in our algorithm is Markovian. The first part of this section establishes that our noise sequence satisfies the so called \emph{ODE friendly property} {\cite{tadic2003asymptotic}}:
A vector-valued sequence of random variables $\{\clE_k\}$ will be called \textit{ODE-friendly} if it admits the decomposition,
\begin{equation}
\clE_k   =  \Delta_k  +  \clT_k-\clT_{k-1} + \epsy_k\,,
\quad k\ge 1
\label{e:bor08aNoiseOS}
\end{equation}
in which:
\begin{enumerate}[(i)]
	\item $\{ \Delta_k : k\ge 1\}$ is a martingale-difference sequence satisfying  $ \Expect[\|  \Delta_{k} \|^2 \! \mid \! \clF_k] \le \bar\sigma^2_\Delta < \infty$ {\em a.s.}\ for all~$k$
	\item $\{ \clT_k : k\ge 1\}$ is a bounded sequence
	\item The final sequence $\{\epsy_k\}$ is bounded and satisfies:
	\begin{equation}
	\sum_{k=1}^\infty \gamma_k \|\epsy_k\| <\infty \quad a.s.\, .
	\label{e:Friendly3OS}
	\end{equation}
\end{enumerate}
Intuitively, if an error sequence satisfies the above properties, it can be shown that its asymptotic effect on the parameter update is zero. This allows us to argue that the matrix gain estimate $\haA_{n+1}$ is close to the mean $A(\theta_n)$ in the fast time-scale recursion \eqref{e:QSNR2OSAdef}. We then consider the slow time-scale recursion \eqref{e:QSNR2OSdef}, and obtain the ODE approximations for $\{\theta_n\}$ and the expected projected cost $\{ b(\theta_n) \}$. The fact that these ODE's are stable (with a unique stationary point) will then establish the convergence of the algorithm.

\subsection{ODE Analysis}
\label{s:ODEOS}

The remainder of this section is dedicated to the proof of the ODE approximation \eqref{e:ZapODEOS}.
The construction of an approximating ODE involves first defining a continuous time process $\overline{\bfmw}$.  Denote
\begin{equation}
t_n = \sum_{i=1}^n \alpha_i,\ \  n\ge 1, \quad t_0=0\,,
\label{e:tnOS}
\end{equation}
and define $\barw({t_n}) = \theta_n$ at these time points, with the definition extended to $\Re_+$ via linear interpolation.

Along with the piecewise linear continuous-time process   $\{ \barw_t : t\ge 0 \}$,   denote by $\{\barclA_t : t\ge 0\}$   the  piecewise linear continuous-time process defined similarly, with $\barclA_{t_n} = \haA_n$,  $n\ge 1$.
Furthermore, for each $t\ge 0$, denote  
\[
\barb_t \equiv b(\barw_t) \eqdef- A(\barw_t) \barw_t - \beta \barcstop(\barw_t)
\]


To construct an ODE, it is convenient first to obtain an alternative and suggestive representation for the pair of equations (\ref{e:QSNR2OSAdef},\ref{e:QSNR2OSdef}).   



\Lemma{t:pre-ODEOS} establishes that the error sequences that appear in the updates for $\{\theta_n\}$ and $ \{ \haA_n \}$ are ``ODE friendly".
\begin{lemma}
	\label{t:pre-ODEOS}
	The pair of equations (\ref{e:QSNR2OSAdef}, \ref{e:QSNR2OSdef}) can be expressed,
	\begin{subequations}
		\begin{align}
		\theta_{n+1} & \!=\! \theta_n \!-\! \alpha_{n+1} \haA^{\dagger}_{n+1} \bigl[ A(\theta_n) \theta_n \! + \! \beta  \barcstop(\theta_n) \! + \! b^*
		\\
		\hspace{-0.2in}&\qquad \qquad  \qquad \qquad +\clE^A_{n+1} \theta_n + \clE^\theta_{n+1} \bigr] 
		\nonumber
		\\[.1cm]
		\haA_{n+1} & = \haA_n + \gamma_{n+1}  \bigl[ A(\theta_n)  - \haA_n +\clE^A_{n+1}   \bigr]
		\label{e:SNR2OSlinearSA_preODE_A}
		\end{align}
		\label{e:SNR2OSlinearSA_preODE}%
	\end{subequations}%
	in which the sequences $\{ \clE^\theta_n\,, \clE^A_n  : n\ge 1\}$  are ODE-friendly.
	\qed
\end{lemma}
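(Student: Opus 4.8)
The plan is to first reduce the updates (\ref{e:QSNR2OSAdef}, \ref{e:QSNR2OSdef}) algebraically so as to peel off the deterministic mean field from the noise, and then to show the residual noise is ODE-friendly by a martingale-plus-Poisson argument adapted to the Markovian setting.

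First I would rewrite the temporal-difference increment. Using $\min(\cstop(X_{n+1}),Q^{\theta_n}(X_{n+1})) = \clS_{\theta_n} Q^{\theta_n}(X_{n+1}) + \clS^c_{\theta_n}\cstop(X_{n+1})$ together with $Q^{\theta_n}(x)=\theta_n^\transpose\psi(x)$, a direct computation gives
\[
\psi(X_n)\td_{n+1} = A_{n+1}\theta_n + \bar b_{n+1},
\]
with $A_{n+1}$ as in (\ref{e:An}) and $\bar b_{n+1} \eqdef \psi(X_n)\bigl[c(X_n)+\beta\clS^c_{\theta_n}\cstop(X_{n+1})\bigr]$. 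Defining
\[
\clE^A_{n+1} \eqdef A_{n+1} - A(\theta_n), \qquad \clE^\theta_{n+1} \eqdef \bar b_{n+1} - \beta\barcstop(\theta_n) - b^*,
\]
and substituting, the $\theta$-update takes exactly the pre-ODE form in (\ref{e:SNR2OSlinearSA_preODE}); the reduction of the $\haA$-recursion (\ref{e:SNR2OSlinearSA_preODE_A}) is immediate from $A_{n+1}=A(\theta_n)+\clE^A_{n+1}$. What remains is to show $\{\clE^A_n\}$ and $\{\clE^\theta_n\}$ are ODE-friendly.

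Both sequences have the common shape $\clE_{n+1} = g(\theta_n,X_n,X_{n+1}) - \bar g(\theta_n)$, where $\bar g(\theta)=\Expect_\pi[g(\theta,X_n,X_{n+1})]$ is the steady-state average for fixed $\theta$ (here $(g,\bar g)=(A_{n+1},A(\theta))$ and $(g,\bar g)=(\bar b_{n+1},\beta\barcstop(\theta)+b^*)$ respectively). I would split each as $\clE_{n+1} = \bigl(g-\Expect[g\mid\clF_n]\bigr) + \bigl(\Expect[g\mid\clF_n]-\bar g(\theta_n)\bigr)$. The first term is a martingale difference by construction. For the second, the map $h_\theta(x)\eqdef \Expect[g(\theta,X_n,X_{n+1})\mid X_n=x]-\bar g(\theta)$ has zero $\pi$-mean for each fixed $\theta$, so under Assumption A1 (uniform ergodicity, (\ref{e:unierg})) the Poisson equation $\check h_\theta - P\check h_\theta = h_\theta$ has a solution $\check h_\theta$ bounded uniformly in $\theta$. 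Writing $h_{\theta_n}(X_n)=\check h_{\theta_n}(X_n)-\check h_{\theta_n}(X_{n+1})+\bigl(\check h_{\theta_n}(X_{n+1})-P\check h_{\theta_n}(X_n)\bigr)$, the last bracket is a further martingale difference, and the telescoping pair is arranged as
\[
\check h_{\theta_n}(X_n)-\check h_{\theta_n}(X_{n+1}) = \bigl(\clT_{n+1}-\clT_n\bigr) + \bigl(\check h_{\theta_n}(X_n)-\check h_{\theta_{n-1}}(X_n)\bigr),
\]
with $\clT_n\eqdef -\check h_{\theta_{n-1}}(X_n)$ bounded, and the final increment controlled by the Lipschitz modulus of $\theta\mapsto\check h_\theta$ times $\|\theta_n-\theta_{n-1}\|=O(\alpha_n)$, which I assign to $\epsy_{n+1}$.

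Finally I would verify the three defining properties. For (i), the martingale differences are bounded because $\state$ is compact and $\{\theta_n\}$ is bounded a.s.\ (A4), so their conditional second moments are uniformly bounded. For (ii), $\{\clT_n\}$ is bounded since the Poisson solutions are. For (iii), $\|\epsy_{n+1}\|=O(\alpha_n)=O(1/n)$, hence $\sum_n \gamma_n\|\epsy_n\| = O\bigl(\sum_n n^{-(1+\varrho)}\bigr)<\infty$ by the choice (\ref{e:GAINSos}) with $\varrho\in(0.5,1)$. I expect the main obstacle to be the regularity of the Poisson solution in $\theta$: the maps $\theta\mapsto A(\theta)$ and $\theta\mapsto\barcstop(\theta)$ carry the discontinuous indicators $\clS_\theta,\clS^c_\theta$, so increments in $\theta$ are not a priori small. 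This is exactly where Assumption A3 is needed—A3.1 smooths the indicator through the conditional density of $\psi(X_{n+1})$, and A3.2 linearizes the stopping boundary $\{\theta^\transpose\psi<\cstop\}$—giving Lipschitz dependence of $h_\theta$, and therefore of $\check h_\theta$, on $\theta$, which is what renders the $\epsy_n$ term summable.
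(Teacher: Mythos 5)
Your proposal is correct and follows essentially the same route as the paper: the same algebraic identity $\psi(X_n)\td_{n+1}=A_{n+1}\theta_n+\bar b_{n+1}$, the same definitions of $\clE^A_{n+1}$ and $\clE^\theta_{n+1}$, and the same martingale-difference-plus-Poisson decomposition with the perturbation term controlled by Lipschitz continuity of $\theta\mapsto\check h_\theta$ (established via A3.1--A3.2) and summability from $\sum_n\gamma_n\alpha_n<\infty$. The only cosmetic difference is that the paper further splits off the $\theta$-independent stationary piece $-\psi(X_n)\psi^\transpose(X_n)+\Expect[\psi\psi^\transpose]$ before applying the parameterized Poisson argument, whereas you absorb it into a single $h_\theta$; this changes nothing of substance.
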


%

The following result establishes that $\haA_n$ recursively obtained by \eqref{e:QSNR2OSAdef} approximates the mean $A(\theta_n)$: 
\begin{lemma}
	Suppose the sequence $\{\mathcal E^A_{n}:n \geq 1\}$ is ODE-friendly. Then, 
	\begin{enumerate}[(i)]
		\item $\displaystyle \lim_{n\rightarrow\infty}\|\haA_n - A(\theta_n) \| = 0,\; a.s.$
		\item Consequently, $\haA_n^{\dagger} \neq \haA_n^{-1}$ only finitely often, and $\displaystyle \lim_{n\rightarrow\infty}\|\haA_n^\dagger - A^{-1}(\theta_n) \| = 0,\; a.s.$
	\end{enumerate}
	\label{t:odegain}
	\qed
\end{lemma}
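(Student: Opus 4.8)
The plan is to treat the fast-time-scale recursion \eqref{e:SNR2OSlinearSA_preODE_A} as a standard stochastic approximation whose associated ODE is the globally exponentially stable linear system $\ddt x = -x$, and to exploit the time-scale separation $\varrho\in(0.5,1)$ to show that the slowly drifting target $A(\theta_n)$ is tracked. Introduce the tracking error $\widetilde A_n \eqdef \haA_n - A(\theta_n)$; part (i) is precisely the claim $\|\widetilde A_n\|\to 0$ a.s. Subtracting $A(\theta_{n+1})$ from \eqref{e:SNR2OSlinearSA_preODE_A} and using $A(\theta_n)-\haA_n = -\widetilde A_n$ gives the self-contained recursion
\[
\widetilde A_{n+1} = (1-\gamma_{n+1})\widetilde A_n + \gamma_{n+1}\clE^A_{n+1} + \big[A(\theta_n)-A(\theta_{n+1})\big].
\]
The coupling term $A(\theta_n)-A(\theta_{n+1})$ is $O(\alpha_{n+1})$: Assumption A3.1 makes $\theta\mapsto A(\theta)$ Lipschitz (the indicators hidden in $\clS_\theta$ are integrated against the smooth density, so $A$ is differentiable despite the discontinuities), while Assumption A4 bounds the slow increment so that $\|\theta_{n+1}-\theta_n\|=O(\alpha_{n+1})$. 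Writing $r_n \eqdef (A(\theta_n)-A(\theta_{n+1}))/\gamma_{n+1}$, we have $\|r_n\| = O(\alpha_{n+1}/\gamma_{n+1}) = O(n^{\varrho-1})\to 0$ because $\varrho<1$, so the recursion takes the clean SA form $\widetilde A_{n+1}=\widetilde A_n+\gamma_{n+1}\big[-\widetilde A_n+\clE^A_{n+1}+r_n\big]$ with a deterministically vanishing cross-scale perturbation.

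Next I would dispose of the Markovian noise $\clE^A_{n+1}$ using its ODE-friendly decomposition \eqref{e:bor08aNoiseOS}, $\clE^A_{n+1}=\Delta_{n+1}+\clT_{n+1}-\clT_n+\epsy_{n+1}$. The martingale part enters through $M_n=\sum_{k\le n}\gamma_k\Delta_k$, whose predictable quadratic variation is bounded by $\bar\sigma_\Delta^2\sum_k\gamma_k^2=\bar\sigma_\Delta^2\sum_k k^{-2\varrho}<\infty$ (here the constraint $\varrho>1/2$ is used), so $M_n$ converges a.s.; the telescoping part is handled by Abel summation together with $\sum_k|\gamma_{k+1}-\gamma_k|<\infty$ and boundedness of $\{\clT_k\}$; and the residual is controlled directly by \eqref{e:Friendly3OS}. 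With all perturbations either summable against $\{\gamma_k\}$ or vanishing, and the limiting ODE $\ddt x=-x$ exponentially stable, the standard convergence theory for such recursions (the two-time-scale/ODE results of \cite{bor08a}, together with the contraction factors $\prod_k(1-\gamma_k)\to 0$) yields $\widetilde A_n\to 0$ a.s., establishing (i).

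For part (ii) I would combine (i) with \Lemma{t:AthetaNegEig}, which provides eigenvalues of $A(\theta)$ strictly bounded away from $0$ and a uniform bound $\sup_\theta\|A^{-1}(\theta)\|\le M<\infty$. Since $\|\widetilde A_n\|\to 0$, eventually $\|A^{-1}(\theta_n)\|\,\|\widetilde A_n\|<1$, so $\haA_n=A(\theta_n)+\widetilde A_n$ is invertible for all large $n$; hence $\haA_n^{\dagger}=\haA_n^{-1}$ for all but finitely many $n$. The Neumann/resolvent perturbation estimate
\[
\|\haA_n^{-1}-A^{-1}(\theta_n)\|\le \frac{\|A^{-1}(\theta_n)\|^2\,\|\widetilde A_n\|}{1-\|A^{-1}(\theta_n)\|\,\|\widetilde A_n\|}
\]
then forces $\|\haA_n^{\dagger}-A^{-1}(\theta_n)\|\to 0$, completing (ii).

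The main obstacle I anticipate is the rigorous control of the cross-time-scale coupling: showing $A(\theta_n)-A(\theta_{n+1})=o(\gamma_{n+1})$ requires both the Lipschitz continuity of $A$ — which is exactly where the smoothing effect of the density Assumption A3.1 is indispensable, since $A$ is built from discontinuous indicators — and an honest $O(\alpha_{n+1})$ bound on the slow increments, which leans on Assumption A4 (and on the boundedness of $\haA_{n+1}^{\dagger}$, a point with a mild circularity that is resolved only because the fast variable equilibrates while the slow one is essentially frozen). A secondary difficulty, already packaged into \Lemma{t:pre-ODEOS}, is that the driving noise is Markovian rather than a martingale difference, so every estimate must be routed through the ODE-friendly decomposition rather than a direct martingale-convergence argument.
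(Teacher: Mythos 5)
Your proof is correct, and it reaches the conclusion by a genuinely different route than the paper. The paper views both recursions on the fast time scale, writes the $\theta$-update as $\theta_{n+1}=\theta_n+\gamma_{n+1}\,o(1)$, and then invokes the two-time-scale ODE framework of Borkar wholesale: the coupled iterates converge a.s.\ to the internally chain transitive invariant set of $\dot w=0$, $\dot{\mathcal A}=A(w)-\mathcal A$, which is the graph $\{(\theta,A(\theta))\}$, giving (i) in two lines. You instead subtract the moving target, derive the self-contained error recursion $\widetilde A_{n+1}=(1-\gamma_{n+1})\widetilde A_n+\gamma_{n+1}\bigl[\clE^A_{n+1}+r_n\bigr]$ with $r_n=O(\alpha_{n+1}/\gamma_{n+1})\to 0$, and drive it to zero by a single-time-scale argument, disposing of the Markovian noise term-by-term through the ODE-friendly decomposition. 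The content is the same --- both arguments live or die on $\alpha_n/\gamma_n\to 0$, the Lipschitz continuity of $\theta\mapsto A(\theta)$ furnished by Assumption A3.1, and the boundedness of $\haA_n^\dagger$ and of $\{\theta_n\}$ (you are right to flag the projection of the pseudo-inverse as the device that breaks the apparent circularity; the paper relies on it in exactly the same way). What your route buys is quantitative and self-contained control: you do not need to verify the hypotheses of the black-box two-time-scale theorem, and the exponential stability of $\dot x=-x$ plus summability of the perturbations does all the work. What the paper's route buys is brevity and a statement that generalizes immediately to nonlinear fast dynamics. Your treatment of part (ii) is actually more complete than the paper's, which only gestures at \Lemma{t:AthetaNegEig}; the Neumann-series perturbation bound you supply, combined with the uniform bound on $\|A^{-1}(\theta)\|$ from that lemma, is precisely the missing step that makes ``$\haA_n^\dagger\neq\haA_n^{-1}$ only finitely often'' rigorous.
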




With the definition of ODE approximation below \eqref{e:ws}, we have:
\begin{lemma}
	\label{t:odetheta}
	The ODE approximation for $\{\theta_n \}$  holds:
	with probability one, the piece-wise continuous function $\barw(t)$ asymptotically tracks the  ODE:
	\begin{equation}
	\begin{aligned}
	\ddt {w}(t) = -A^{-1}(w(t)) \big [ b^*-b(w(t)) \big] 
	\end{aligned}
	\label{eq:thetaODEsubseq}
	\end{equation}
	\qed
\end{lemma}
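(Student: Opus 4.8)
The plan is to start from the \emph{ODE-friendly} representation of the slow iterate established in \Lemma{t:pre-ODEOS} and to recognize its drift as the vector field appearing in \eqref{eq:thetaODEsubseq}. Recalling the definition \eqref{e:btheta}, for every $\theta$ one has the algebraic identity $A(\theta)\theta + \beta\barcstop(\theta) = -b(\theta)$, so the bracketed drift in \eqref{e:SNR2OSlinearSA_preODE} equals $b^* - b(\theta_n)$ modulo the ODE-friendly perturbations $\clE^A_{n+1}\theta_n + \clE^\theta_{n+1}$. Substituting and splitting off the gain estimate gives the recursion
\begin{equation*}
\theta_{n+1} = \theta_n - \alpha_{n+1} A^{-1}(\theta_n)\big[ b^* - b(\theta_n)\big] + \alpha_{n+1}\big[\mathcal{W}_{n+1} + \mathcal{N}_{n+1}\big],
\end{equation*}
where $\mathcal{W}_{n+1} = \big(A^{-1}(\theta_n) - \haA_{n+1}^\dagger\big)\big[b^*-b(\theta_n)\big]$ is the error from replacing the estimated gain by the exact inverse, and $\mathcal{N}_{n+1} = -\haA_{n+1}^\dagger\big[\clE^A_{n+1}\theta_n + \clE^\theta_{n+1}\big]$ collects the noise.

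First I would show both correction sequences are asymptotically inert for the ODE limit. For $\mathcal{W}_{n+1}$, Assumption A4 bounds $\{\theta_n\}$, so \Lemma{t:bthetaLip} bounds $\{b^*-b(\theta_n)\}$, while \Lemma{t:odegain}(ii) gives $\|\haA_{n+1}^\dagger - A^{-1}(\theta_n)\|\to 0$ a.s.; hence $\mathcal{W}_{n+1}\to 0$ a.s. For $\mathcal{N}_{n+1}$, the factor $\haA_{n+1}^\dagger$ is eventually the uniformly bounded $\haA_{n+1}^{-1}$ by \Lemma{t:odegain}(ii) and \Lemma{t:AthetaNegEig}(ii), and $\{\theta_n\}$ is bounded, so the product of a bounded (slowly varying) gain with an ODE-friendly sequence is again ODE-friendly, after an Abel-summation step that absorbs the slow variation of $\haA_{n+1}^\dagger$ into the telescoping and summable parts. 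Since the step sizes \eqref{e:GAINSos} satisfy $\alpha_n\le\gamma_n$ for all large $n$, the summability \eqref{e:Friendly3OS} stated with $\gamma$ transfers to $\alpha$; together with $\sum_n \alpha_n^2<\infty$ (so the martingale-difference part converges a.s.) and the telescoping of the bounded part, the accumulated contribution of $\mathcal{N}_{n+1}$ to the interpolated trajectory vanishes over any finite window.

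Next I would verify that the limiting vector field $\xi(w) = -A^{-1}(w)\big[b^*-b(w)\big]$ is Lipschitz on the a.s.\ bounded region visited by $\{\theta_n\}$. Here \Lemma{t:bthetaLip} supplies Lipschitz continuity of $b$; Assumption A3.1, through the smoothing of the indicator in $\clS_\theta$ by the conditional density, makes $\theta\mapsto A(\theta)$ Lipschitz; and \Lemma{t:AthetaNegEig}(ii) bounds $A^{-1}$ uniformly, so that $A^{-1}(\theta^1)-A^{-1}(\theta^2) = A^{-1}(\theta^1)\big(A(\theta^2)-A(\theta^1)\big)A^{-1}(\theta^2)$ is Lipschitz as well. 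With a Lipschitz vector field and the two correction sequences shown negligible, the recursion is in the standard form to which the ODE method for stochastic approximation with ODE-friendly (Markovian) noise applies \cite{bor08a}, yielding the asymptotic tracking $\lim_{s\to\infty}\sup_{t\in[s,s+T]}\|\barw(t)-w^s(t)\|=0$ a.s.\ for the ODE \eqref{eq:thetaODEsubseq}.

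The main obstacle is the two-time-scale coupling: one must show that replacing the running estimate $\haA_{n+1}^\dagger$ by the exact inverse $A^{-1}(\theta_n)$ is harmless \emph{uniformly} over each interpolation window $[s,s+T]$, not merely pointwise in $n$. This is exactly where \Lemma{t:odegain} --- the fast-scale tracking $\haA_n\to A(\theta_n)$ --- is indispensable, and where the separation of time-scales enforced by \eqref{e:GAINSos} with $\varrho\in(0.5,1)$ guarantees that the gain estimate equilibrates faster than $\theta_n$ moves, so that the frozen-$\theta$ analysis underlying $\mathcal{W}_{n+1}\to 0$ is legitimate.
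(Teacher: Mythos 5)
Your proposal is correct and follows essentially the same route as the paper's proof: rewrite the slow-time-scale recursion via \Lemma{t:pre-ODEOS}, replace $\haA_{n+1}^{\dagger}$ by $A^{-1}(\theta_n)$ plus a vanishing term using \Lemma{t:odegain} and \Lemma{t:AthetaNegEig}, identify the drift $A(\theta)\theta+\beta\barcstop(\theta)+b^*$ with $b^*-b(\theta)$, and invoke the ODE method for ODE-friendly noise under Assumption A4. You merely spell out details the paper leaves implicit (the $\mathcal{W}/\mathcal{N}$ split, transfer of summability from $\gamma_n$ to $\alpha_n$, and Lipschitz continuity of the limiting vector field), which is a faithful elaboration rather than a different argument.
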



For a fixed (but arbitrary) time horizon $T>0$, we define two families of uniformly bounded and uniformly Lipschitz continuous functions: $\{\barw(s+t),t\in[0, T]\}_{s\geq 0}$ and $\{\barb(s+t),t\in[0, T]\}_{s\geq 0}$. Sub-sequential limits of $\{\barw(s+t),\,t\in[0, T]\}_{s\geq 0}$ and $\{\barb(s+t),t\in[0, T]\}_{s\geq 0}$ are denoted  $w_t$ and $b_t$ respectively.

We recast the ODE limit of the projected  cost as follows:
{
	\begin{lemma}
		\label{t:b_ode}
		For any sub-sequential limits $\{w_t, b_t\}$,
		\begin{enumerate}[(i)]
			\item they satisfy $  b_t = b(w_t)$.
			\item for {\em a.e.} $t\in[0,T]$,
			\begin{equation}
			\begin{aligned}
			\frac{d}{dt}b_t 
			& = -A(w_t)\frac{d}{dt}w_t
			= - b_t + b^*
			\end{aligned}
			\label{eq:costODE}
			\end{equation}
		\end{enumerate}
		\qed
	\end{lemma}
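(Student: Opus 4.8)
The plan is to establish (i) from the continuity of $b$ and (ii) from the chain rule combined with the ODE of \Lemma{t:odetheta}, the only genuine work being the identification of the Jacobian of $b$.

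For part (i), I would use that $\barb_s = b(\barw_s)$ holds for every $s\ge 0$ by construction. Fixing a sub-sequence $s_k\to\infty$ along which $\barw(s_k+\cdot)\to w_\cdot$ and $\barb(s_k+\cdot)\to b_\cdot$ uniformly on $[0,T]$, the global Lipschitz continuity of $b$ from \Lemma{t:bthetaLip} gives $\barb(s_k+t)=b(\barw(s_k+t))\to b(w_t)$ uniformly on $[0,T]$, so uniqueness of the uniform limit forces $b_t=b(w_t)$.

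For part (ii), the limits $w_\cdot$ and $b_\cdot$ are uniform limits of uniformly Lipschitz functions, hence Lipschitz, absolutely continuous, and differentiable at a.e.\ $t\in[0,T]$; it remains to identify the derivative. The central computation is that $b\in C^1$ with Jacobian $-A(\cdot)$. Starting from $b(\theta)=\Expect[c^\theta(X_n)\psi(X_n)]$ (\Prop{t:projCost}) and the tower property, I would write
\[
b(\theta)=\Expect[\psi(X_n)Q^\theta(X_n)]-\beta\Expect[\psi(X_n)\min(\cstop(X_{n+1}),Q^\theta(X_{n+1}))],
\]
and differentiate using $\partial_\theta Q^\theta=\psi^\transpose$ together with $\partial_\theta\min(\cstop(y),Q^\theta(y))=\ind\{Q^\theta(y)<\cstop(y)\}\psi(y)$ off the boundary set. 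Collecting terms yields
\[
\partial_\theta b(\theta)=\Expect[\psi\psi^\transpose]-\beta\Expect[\psi\,\clS_\theta\psi^\transpose]=-A(\theta),
\]
the last identity being the definition \eqref{e:Atheta}.

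I expect the main obstacle to be the non-smoothness of the $\min$: I must justify that the boundary set $\{Q^\theta=\cstop\}$ carries zero mass and that differentiation and expectation may be interchanged, so that $b$ is genuinely $C^1$ with continuous Jacobian $-A(\cdot)$. This is precisely the role of Assumption A3.1: the conditional density of $\psi(X_{n+1})$ given $X_n$ makes $\theta\mapsto\Expect[\clS_\theta\psi^\transpose(X_{n+1})\mid X_n]$ smooth, the moving hyperplane $\{\theta^\transpose\psi=\cstop\}$ having zero conditional mass, while boundedness of $\psi$ and $\cstop$ on the compact state space $\state$ supplies the dominating functions needed for the interchange. Granting this, the chain rule applies at a.e.\ $t$, giving $\ddt b_t=\partial_\theta b(w_t)\,\ddt w_t=-A(w_t)\ddt w_t$, the first claimed equality. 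Finally, substituting $\ddt w_t=-A^{-1}(w_t)[b^*-b(w_t)]$ from \Lemma{t:odetheta} and using part (i) yields $-A(w_t)\ddt w_t=b^*-b_t=-b_t+b^*$, which is the second equality and completes the proof.
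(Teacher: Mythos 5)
Your proposal is correct, but for part (ii) it takes a genuinely different route from the paper. The paper never asserts that $b$ is $C^1$: it works along the trajectory, linearizing $w_t$ and $c_t=c^{w_t}$ at an a.e.\ point of differentiability $t_0$, and identifies the derivative $v^c(x)$ by an envelope-type argument --- replacing the exact selector $\clS_{L^w_t}$ in the pointwise minimum by the frozen selector $\clS_{w_{t_0}}$ produces a one-sided inequality valid for $t$ on \emph{both} sides of $t_0$, which forces the linear terms to match and yields $v^c(x)=-\Expect[\beta\clS_{w_{t_0}}\psi(X_{n+1})^\transpose-\psi(X_n)^\transpose\mid X_n=x]\,v^w$ without ever interchanging a derivative with an expectation. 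You instead prove the stronger global statement that $b$ is continuously differentiable with Jacobian $-A(\theta)$, by differentiating under the expectation and arguing via A3.1--A3.2 that the boundary set $\{Q^\theta=\cstop\}$ (a hyperplane in $\psi$-coordinates, by A3.2) carries zero conditional mass; the chain rule then finishes. Both arguments are sound and land on $\ddt b_t=-A(w_t)\ddt w_t$. Your version buys a reusable smoothness statement for $b$ and is the more standard calculation, but it carries the burden of justifying the interchange and upgrading an a.e.-directional derivative to a genuine $C^1$ Jacobian (for which you would lean on the continuity of $\theta\mapsto A(\theta)$, essentially \Lemma{t:Econtinuous}); the paper's one-sided inequality sidesteps all of that at the cost of being a purely local, trajectory-bound identity. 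The remaining steps --- part (i) via Lipschitz continuity of $b$ and uniqueness of uniform limits, and the substitution of the ODE from \Lemma{t:odetheta} to get $-A(w_t)\ddt w_t=-b_t+b^*$ --- coincide with the paper's.
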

}

\paragraph*{Proof of \Theorem{t:ZapOS}}
Bounedness of sequences $\{\haA_{n}:n\geq 0\}$ and $\{\haA_n^{-1}:n\geq 0\}$ is established in Lemma \ref{t:odegain}. Together with boundedness assumption of $\{\theta_n: n\geq 0\}$, the ODE approximation is established in Lemma \ref{t:b_ode}. Result (i) then follows from those two results using standard arguments from \cite{bor08a}.
\hfill$\blacksquare$


\section{Numerical Results}
\label{s:num}

\begin{figure*}[h]
	\Ebox{1}{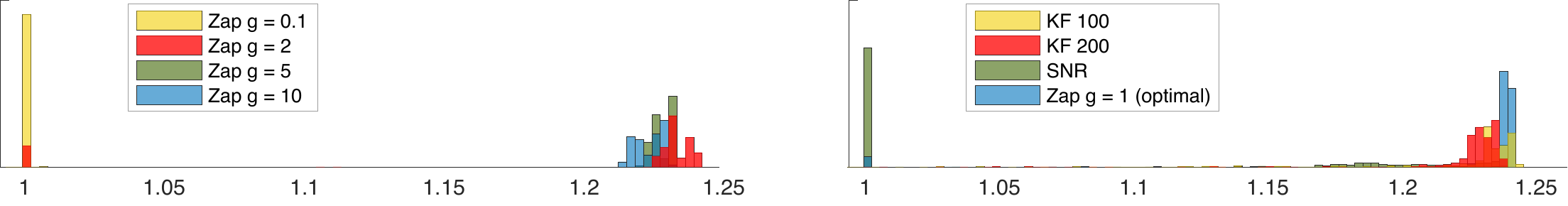}
	\caption{\small  Average rewards obtained using various Q-learning algorithms}
	\label{AvgReward_2e6_All_combined}
\end{figure*}

In this section we illustrate the performance of the Zap Q-learning algorithm in comparison with existing techniques, on a finance problem that has been studied in prior work~\cite{choroy06, tsiroy99}. We observe that the Zap algorithm performs very well, despite the fact that some of the technical assumptions made in \Section{s:main_results} do not hold. 

\subsection{Finance model}

The following finance example is used in \cite{choroy06, tsiroy99} to evaluate the performance of their algorithms for optimal stopping. The reader is referred to these references for complete details of the problem set-up. 

The Markovian state process considered is the vector of ratios: 
$
X_n = (\tilp_{n-99},\tilp_{n-98}, \ldots,\tilp_{n} )^\transpose / \tilp_{n-100}$, $n \geq 0
$,
in which $\{\tilp_t : t\in\Re\}$ is a geometric Brownian motion (derived from an exogenous price-process).
This uncontrolled Markov chain is positive Harris recurrent on the state space $\state \equiv \Re^{100}$, so $\state$ is not compact. The Markov chain is uniformly ergodic.

The ``time to exercise'' is modeled as a stopping time $\tau \in \bfmZ^+$.   The associated expected reward is defined as $
\Expect [ \beta^{\tau} r(X_\tau) ]$,  with $r(X_n) \eqdef X_n(100) = \tilp_{n} / \tilp_{n-100}$ and $\beta \in (0,1)$ fixed.   The objective of finding a policy that maximizes the expected reward is modeled as an optimal stopping time problem.

The value function is defined to be the infimum \eqref{e:opt_stop_h_opt}, with $c \equiv 0$ and $\cstop \equiv - r$ 
{(the objective in \Section{s:intro} is to minimize the expected cost, while here, the objective is to maximize the expected reward)}.
{
	The associated Q-function is defined using \eqref{e:QOSBell}, and the associated optimal policy using \eqref{e:opt-policy}:
	$
	\phi^*(x) = \ind \{ r(x) \ge Q^*(x) \}.
	$}
\archive{I am so suspicious of this claim. We must discuss}

When the Q-function is linearly approximated using \eqref{e:Qtheta_linear}, for a fixed parameter vector $\theta$, the associated value function can be expressed: 
\begin{equation}
\begin{aligned} 
h_{\phi^\theta}(x) & \eqdef \Expect [\beta^{\tau_\theta } r(X_{\tau_\theta }) \mid x_0 = x ]\,,
\end{aligned}
\label{e:Jtheta}
\end{equation}
where,
\begin{equation}
\begin{aligned} 
\tau_\theta & \eqdef \min \{n \colon  \phi^\theta (X_n) = 1 \}
\\
\phi^\theta(x) & \eqdef \ind \{ r(x)  \ge Q^\theta(x) \}
\end{aligned}
\label{e:phi_theta}
\end{equation}
Given a parameter estimate $\theta$ and the initial state $X(0) = x$, the corresponding average reward $h_{\phi^\theta}(x)$  was estimated using Monte-Carlo in the numerical experiments that follow. 

\subsection{Approximation \& Algorithms}

Along with Zap Q-learning algorithm we also implement the \emph{fixed point Kalman filter} algorithm of \cite{choroy06} to estimate $\theta^*$. This algorithm is given by the update equations \eqref{e:matrix_gain_Q} and \eqref{e:kalmangain}. 
{The computational as well as storage complexities of the least squares Q-learning algorithm (and its variants) \cite{yuber07} are too high for implementation.}

\subsection{Implementation Details}

The experimental setting of \cite{choroy06, tsiroy99} is used to define the set of basis functions and other parameters. 
\archive{\today:  We will need to be sure there is no ambiguity about which parameters were used.   Did  
	tsiroy99,choroy06 settle on these values?
	\\
	were used, with volatility factor $\sigma = 0.02$ and the short term interest rate $\interest = 0.0004$. }
We choose the dimension of the parameter vector $d = 10$, with the basis functions defined in \cite{choroy06}. 
The objective here is to compare the performances of the fixed point Kalman filter algorithm with the Zap-Q learning algorithm in terms of the resulting average reward \eqref{e:Jtheta}. 

Recall that the step-size for the Zap Q-learning algorithm is given in \eqref{e:GAINSos}. We set $
\gamma_n = {n^{-0.85}}
$
for all implementations of the Zap algorithm, but similar to what is done in \cite{choroy06}, we experiment with different choices for $\alpha_n$. 
Specifically, in addition to $\alpha_n = n^{-1}$, we let:
\begin{equation}
\alpha_n = \frac{g}{b + n}
\label{e:bvr_ss}
\end{equation}
with $b = 1e4$ and experiment with $g = 2, 5,$ and $10$. In addition, we also implement Zap with $(\alpha_n = 0.1/n \,, \gamma_n = 1/n^{0.85})$. Based on the discussion in \Section{s:asym_var}, we expect this choice of step-size sequences to result in infinite asymptotic variance. 

In the implementation of the {fixed point Kalman filter} algorithm, as suggested by the authors, we choose step-size $\alpha_n\!=\!1/n$ for the matrix gain update rule in \eqref{e:kalmangain}, and step-size of the form \eqref{e:bvr_ss} for the parameter update in \eqref{e:matrix_gain_Q}. Specifically, we let $b = 1e4$, and $g = 100$ and~$200$. 

The number of iterations for each of the algorithm is fixed to be $N = 2e6$.

\subsection{Experimental Results}

\begin{figure*}
	\Ebox{1}{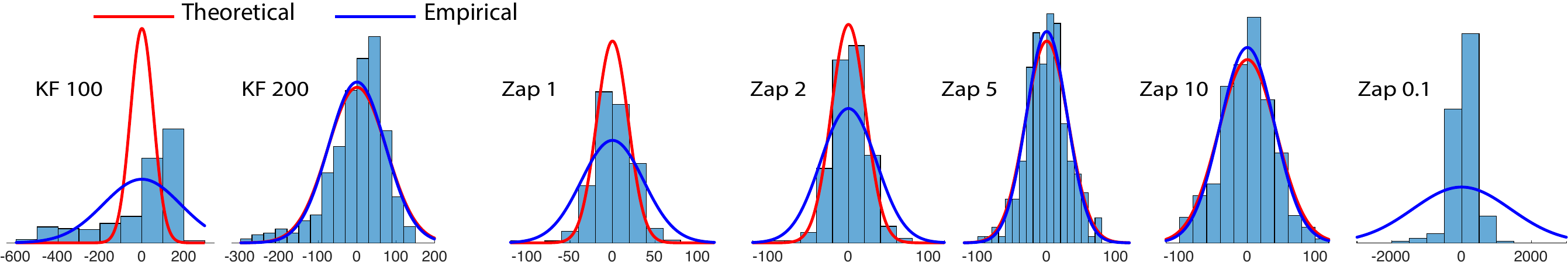}
	\caption{\small   Asymptotic variance for various Q-learning algorithms for optimal stopping in finance}
	\label{asym_var_plot}
\end{figure*}

The average reward histogram was obtained by the following steps: We simulate $500$ parallel simulations of each of the algorithms to obtain {as many} estimates of $\theta^*$. Each of these estimates defines a policy $\phi^{\theta_N}$ defined in \eqref{e:phi_theta}. We then estimate the corresponding average reward $h_{\phi^{\theta_N}}(x)$ defined in \eqref{e:Jtheta}, with $X(0) = x = 1$. 

Along with the average discounted rewards, we also plot the histograms to visualize the asymptotic variance \eqref{e:Sigma_theta}, for each of the algorithms. The theoretical values of the covariance matrices $\Sigma^*$ and $\Sigma^G_\Theta$ were estimated through the following steps:
The matrices $A(\theta^*)$ and $\Sigma_\psi$ (the limit of the matrix gain used in \cite{choroy06}) were estimated via Monte-Carlo. Estimation of $A(\theta^*)$ requires  an estimate of $\theta^*$; this was taken to be  $\theta_N$ obtained using the Zap-Q algorithm with $\alpha_n = n^{-1}$ and $\gamma_n = n^{-0.85}$.  This estimate of $\theta^*$  was also used to estimate the  covariance matrix $\Sigma_\Delta$ defined in \eqref{e:Sigma_Delta} using the batch means method.  
The matrices $\Sigma^G_\Theta$ and $\Sigma^*$    were then obtained using \eqref{e:gain_lyap} and  \eqref{e:SigmaOpt}, respectively.

\Fig{AvgReward_2e6_All_combined}  contains the histograms of the average rewards obtained using the above algorithms. \Fig{asym_var_plot} contains the histograms of $\sqrt{N} \big(\theta_N(8) - \theta^*(8) \big)$ 
{along with a plot of the theoretical prediction.} 


It was observed that the eigenvalues of the matrix $A(\theta^*)$ have a wide spread: The condition-number is of the order $10^4$. 
Despite a badly conditioned matrix gain, it is observed in Fig.~\ref{AvgReward_2e6_All_combined}, that the average rewards of the Zap-Q algorithms are better than its competitors. It is also observed that the algorithm is robust to the choice of step-sizes. In \Fig{asym_var_plot} we observe that the asymptotic behavior of the algorithms is close match to the theoretical prediction. Specifically,  large variance of Zap-Q with step-size $\alpha_n = 0.1 /n$ confirms that the asymptotic variance is very large (ideally, infinity), if the eigenvalues of the matrix $G A(\theta^*) > -\half$.


\section{Conclusion} 
\label{s:conc}

In this paper, we extend the theory for the Zap Q-learning algorithm to a linear function approximation setting, with application to optimal stopping. We prove convergence of the algorithm using ODE analysis, and also observe that it achieves optimal asymptotic variance.  The extension of the previous analysis to the current setting  is \emph{not trivial}: Analysis of Zap-Q in the tabular case is much simpler, with lots of special structures, and in general, the theory for convergence of Q-learning algorithms in a function approximation setting does not exist. More importantly, we believe that the ODE analysis obtained in this paper provides important insights into the behavior of the Zap-Q algorithm, even in a function approximation setting. This may be a starting point for analysis of Q-learning algorithms in general function-approximation settings, which is an on-going work.



\bibliographystyle{IEEEtran}


\newpage
\onecolumn
\section{APPENDIX}
\setcounter{section}{7}

\paragraph*{Proof of \Lemma{t:Fthetacont}}
Based on the definition \eqref{e:Ftheta}, we have:
\[
\begin{aligned}
\| F^\theta Q - F^\theta Q' \| 
& = \beta \|P H^\theta Q  -  P H^\theta Q'\|
\\
& \leq \beta \|H^\theta Q  -  H^\theta Q'\|
\\
& \leq \beta \|Q - Q' \| \,,
\end{aligned}
\]
where the first inequality follows from the fact that $\|P\| \leq 1$ (with $\| P\|$ being the induced operator norm in $L_2(\pi)$). The last inequality is true because:
\[
H^\theta Q (x) - H^\theta Q'(x) =
S_\theta \big (Q - Q' \big)(x)\,, \quad x \in \state 
\]
\qed

\paragraph*{Proof of \Lemma{t:AthetaNegEig}}
To show result (i), we rewrite $A(\theta)$ as the difference of two matrices, $A(\theta) = A_{NL}(\theta) - A_L$, denoting $A_{NL}(\theta)$ to be the part of the matrix that depends on $\theta$ and $A_L$ to be the one that is independent of $\theta$:
\[
\begin{aligned}
A_{NL}(\theta) & \eqdef \Expect[ \psi(X_n) \beta S_\theta \psi^\transpose (X_{n+1})] \\ 
A_L  & \eqdef \Expect[\psi(X_n) \psi^\transpose(X_n)]
\end{aligned}
\]
Proving \eqref{e:AthetaNegEig} is equivalent to proving:
\[
v^\transpose A_{NL}(\theta) v  - v^\transpose A_L v  \leq (\beta - 1) v^\transpose \Sigma_\psi v, \quad v\in \Re ^d.
\]
The proof is easier to follow if we suppose that the vector $v$ is a difference of two parameter vectors, $v = \theta^1 - \theta^2$. 
Expanding the left hand side of the above inequality:
\[
\begin{aligned}
v^\transpose A_{NL}(\theta) v 
& = (\theta^1 - \theta^2)^\transpose \Expect \big[ \psi(X_n) \beta P H^\theta \big( Q^{\theta^1}(X_n) - Q^{\theta^2} (X_n) \big) \big ]
\\
& = \beta \Expect \big[ \big( Q^{\theta^1}(X_n) - Q^{\theta^2} (X_n) \big) P H^\theta \big( Q^{\theta^1}(X_n) - Q^{\theta^2} (X_n) \big) \big ].
\end{aligned}
\]
Next, using Cauchy-Schwartz and the fact that $\|P\|_\pi \leq 1$,
\[
\begin{aligned}
v^\transpose A_{NL}(\theta) v 
& \leq \beta \|Q^{\theta^1} - Q^{\theta^2} \| \|H^\theta (Q^{\theta^1} - Q^{\theta^2}) \|
\\
& \leq \beta \|Q^{\theta^1} - Q^{\theta^2} \|^2
\\
& = \beta v^\transpose A_L v.
\\
& = (\beta - 1) v^\transpose A_L v + v^\transpose A_L v.
\end{aligned}
\]
Rearranging the terms, and noting that $\Sigma_\psi = A_L$, the statement of the Lemma follows:
\begin{equation}
\begin{aligned}
v^\transpose A(\theta) v 
& = v^\transpose A_{NL}(\theta) v  - v^\transpose A_L v 
\\
& \leq (\beta - 1) v^\transpose \Sigma_\psi v
\end{aligned}
\label{e:Anegative}
\end{equation}
{
	Next, for fixed matrix $A(\theta)$ with eigenvalue-eigenvector pair $\lambda_A\in\mathbb C$, $v=a+bi \in\mathbb C^d$, we consider
	\[
	v^* A(\theta) v = (a^\transpose - b^\transpose i)A(\theta)(a + b i) = a^\transpose A(\theta)a + b^\transpose A(\theta)b +[a^\transpose A(\theta) b - b^\transpose A(\theta)a]i
	\]
	where $v^*$ denotes the conjugate transpose of $v$. With $v^* A(\theta) v = \lambda_A v^* v $, it follows that
	\[
	\Real\{\lambda_A\}  v^* v = \Real\{ v^* A(\theta) v\} = a^\transpose A(\theta)a + b^\transpose A(\theta)b
	\]
	Let $\lambda_\psi>0$ be the largest eigenvalue of $\Sigma_\psi$, by the inequality \eqref{e:Anegative}, the following relation holds
	\[
	\begin{aligned}
	\Real\{\lambda_A\}  v^* v &= a^\transpose A(\theta)a + b^\transpose A(\theta)b \\
	&\leq (\beta - 1)\lambda_\psi[ a^\transpose a + b^\transpose b]\\
	& = (\beta - 1)\lambda_\psi v^*v
	\end{aligned}
	\]
	Therefore, $\Real\{\lambda_A\}$ is negative and bounded above by $(\beta -1)\lambda_\psi$. 
	For the last part, $\|A(\theta)^{-1}\|_F$ is bounded using an inequality from \cite{cheng2014note}
	\begin{equation}
	\|A(\theta)^{-1}\|_F \leq \frac{\sqrt{d}}{|\det A(\theta)|} \Big(\frac{\|A(\theta)\|_F}{d-1} \Big)^{\frac{d-1}{2}}
	\end{equation}
	where $d$ is the dimension. Provided the bound over eigenvalues of $A(\theta)$ and compactness assumption of state space $\state$, there exists some constant $\ell_A$ such that
	\begin{equation}
	\label{eq:invBound}
	\|A(\theta)^{-1}\|_F \leq \frac{\sqrt{d}}{[(1-\beta)\lambda_\psi]^d}\Big(\frac{\ell_A}{d-1} \Big)^{\frac{d-1}{2}}
	\end{equation}
	The claim of uniform boundedness of $\{A(\theta)^{-1}:\theta\in \mathbb R^d \}$ then follows.
	\hfill$\blacksquare$
}
\bigskip
\bigskip

%

\paragraph*{Proof of \Lemma{t:bthetaLip}}
For any two parameter vectors $\theta^1, \theta^2 \in \Re^d$, we have:
\[
\begin{aligned}
\| b(\theta^1) - b(\theta^2) \|
& = \| \psi ( - F Q^{\theta^1} + F Q^{\theta^2} + Q^{\theta^1} - Q^{\theta^2} ) \|
\\
& \leq \| \psi ( F Q^{\theta^1} - F Q^{\theta^2} )  \| + \| \psi (Q^{\theta^1} - Q^{\theta^2} ) \|
\\
& \leq \beta \| \psi ( Q^{\theta^1} - Q^{\theta^2} )  \| + \| \psi ( Q^{\theta^1} - Q^{\theta^2} ) \|
\\
& \leq (1 + \beta) \|\psi\|^2 \|\theta^1 - \theta^2\|.
\end{aligned}
\]
\hfill$\blacksquare$

\bigskip
\bigskip

\begin{lemma}
	\label{t:stationaryNoise}
	Suppose that $f$ is a bounded function on $\state$ with zero mean.  Then the sequence $\{f(X_n) \}$ is  ODE-friendly,  with  $\{\epsy_k\}$ equal to zero,  and $ \clT_k = \hat f(X_k)$,  with $\hat f$ the solution to Poisson's equation with forcing function $f$; That is, $\haf$ satisfies \[
	\haf (x) = f(x) - \pi(f) + \Expect[\haf(X_{n+1}) | X_n = x] \,, x \in \state \,,
	\]
	where $\pi(f) \eqdef \Expect[f(X)] \,, X \sim \pi$.
	\qed
\end{lemma}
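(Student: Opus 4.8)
The plan is to construct the ODE-friendly decomposition directly from the solution of Poisson's equation, whose existence and boundedness are furnished by the uniform ergodicity Assumption~A1. Since $f$ has zero mean, $\pi(f)=0$, and Poisson's equation reduces to $\haf = f + P\haf$. First I would exhibit the candidate solution as the Neumann series $\haf \eqdef \sum_{n\ge 0} P^n f$ and verify that it is well defined and uniformly bounded. Because $\pi(f)=0$ we may write $P^n f(x) = \int \big(P^n(x,dy)-\pi(dy)\big) f(y)$, so the bound \eqref{e:unierg} yields $|P^n f(x)| \le D\rho^n \|f\|_\infty$ for $n\ge 1$; the series therefore converges geometrically, uniformly in $x$, with $\|\haf\|_\infty \le \|f\|_\infty\big(1 + D/(1-\rho)\big) < \infty$. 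A dominated-convergence argument then shows that this $\haf$ indeed satisfies $\haf = f + P\haf$.

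Second, with $\haf$ in hand I would produce the decomposition. Using $f(X_k) = \haf(X_k) - P\haf(X_k)$ together with the Markov property $P\haf(X_k) = \Expect[\haf(X_{k+1})\mid\clF_k]$, and adding and subtracting $\haf(X_{k+1})$, I obtain
\[
f(X_k) = \big(\haf(X_{k+1}) - \Expect[\haf(X_{k+1})\mid\clF_k]\big) + \haf(X_k) - \haf(X_{k+1}).
\]
This is precisely the form \eqref{e:bor08aNoiseOS}: the first bracket is the martingale-difference term, the two remaining terms telescope with $\clT_k = \haf(X_k)$, and $\epsy_k \equiv 0$. (The orientation of the telescoping increment and the index carried by the martingale part are only a harmless relabelling of the convention in \eqref{e:bor08aNoiseOS}, since summation by parts treats either orientation identically.)

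Third, I would verify the three defining properties. The martingale-difference property follows from the tower rule, and the conditional second moment is bounded above by $4\|\haf\|_\infty^2$ almost surely, since $\haf$ is bounded; this gives property~(i). Property~(ii) holds because $|\clT_k| = |\haf(X_k)| \le \|\haf\|_\infty < \infty$. Property~(iii) is immediate since $\epsy_k\equiv 0$, so $\sum_k \gamma_k\|\epsy_k\| = 0$.

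The main obstacle is really the first step, the existence of a \emph{bounded} Poisson solution, and this is exactly where Assumption~A1 is indispensable: mere (non-uniform) ergodicity would deliver only $\haf\in L_2(\pi)$, whereas the almost-sure conditional-variance bound in~(i) and the uniform boundedness in~(ii) require the geometric, \emph{uniform} ergodicity of \eqref{e:unierg} to control $\|\haf\|_\infty$. Once that bound is secured, the remainder is bookkeeping with the Markov property.
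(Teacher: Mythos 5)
Your proof is correct and follows essentially the same route as the paper's: solve Poisson's equation for the forcing function $f$ and split $f(X_k)$ into a bounded martingale-difference term plus a telescoping term in $\haf$, with $\epsy_k\equiv 0$. The only difference is that you make explicit (via the Neumann series $\sum_{n\ge 0}P^nf$ and the geometric bound from \eqref{e:unierg}) the existence and uniform boundedness of $\haf$, which the paper simply cites from Meyn--Tweedie; your sign/index conventions differ harmlessly from the paper's, as you note.
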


\begin{proof}
	Let $\hat f:\state\rightarrow \Re ^{d}$ solve Poisson's equation:
	\[
	\Expect[\hat f(X_{n+1}) - \hat f(X_n)|\mathcal F_n] = f(X_n)
	\]
	with the forcing function  $f(X_n)$ in Lemma \ref{t:stationaryNoise}. The following decomposition is obtained:
	\[
	\begin{aligned}
	f(X_n) &=  \Expect[\hat f(X_{n+1}) - \hat f(X_n)|\mathcal F_n] \\
	&= \underbrace{\Expect[\hat f(X_{n+1}) |\mathcal F_n] - \hat f(X_{n+1})}_{\text{Martingale difference}} + \underbrace{\hat f(X_{n+1}) - \hat f(X_n)}_{\text{telescoping}}
	\end{aligned}
	\]
	where $\hat f(X_n)$ is bounded since the forcing function $f(X_n)$ is bounded \cite{MT}.
\end{proof}
\ad{HAD not even introduced what Poisson's equation and forcing function means; Is this OK here?}
\ad{Lots of stuff here seem to be tooo technical for anyone to understand; Maybe move it to Appendix? Or we will give more details?}
\ad{$L_\infty$ undefined. This should be OK? }

\bigskip
\bigskip

\begin{lemma}
	\label{t:Econtinuous}
	Denote for $n\ge 0$,  
	$M_{\psi, \theta} ( {n+1}) = \psi(X_n)\clS_{\theta}\psi^\transpose(X_{n+1})$. There exists a constant $\ell_M<\infty$ such that, with probability one,
	\begin{equation*}
	\label{eq:noiseTerm}
	\| 	\Expect\big[M_{\psi, \theta^1} ( {n+1}) -M_{\psi, \theta^2} ( {n+1}) \mid \clF_n\big]  \|
	\leq \ell_M\|\theta^1 - \theta^2 \| 
	\end{equation*} 
	\qed
\end{lemma}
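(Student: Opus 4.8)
The plan is to use Assumption~A3.2 to turn the data-dependent stopping threshold into a fixed one, and then Assumption~A3.1 to dominate the true conditional law by a Gaussian that can be integrated explicitly. Under A3.2 write $\cstop = 1 + \theta_c^\transpose \psi$ for some $\theta_c\in\Re^d$. Setting $z\eqdef\psi(X_{n+1})$ and $u_i\eqdef\theta^i-\theta_c$, the event defining $\clS_{\theta^i}$ becomes the half-space $\{Q^{\theta^i}(X_{n+1})<\cstop(X_{n+1})\}=\{u_i^\transpose z<1\}$, which depends on $X_{n+1}$ only through $z$. Conditioning on $X_n=x$, extracting the $\clF_n$-measurable factor $\psi(X_n)$, and applying the triangle/Cauchy--Schwarz inequality componentwise, the claim reduces to bounding, uniformly in $x$,
\[
\|\psi(x)\|\int_{\Re^d}\bigl|\ind\{u_1^\transpose z<1\}-\ind\{u_2^\transpose z<1\}\bigr|\,\|z\|\,f_{\psi\mid x}(z)\,dz .
\]
The prefactor $\|\psi(x)\|$ is bounded over $\state$ under the standing assumptions, and by A3.1 we have $f_{\psi\mid x}(z)\le C\,\varphi_x(z)$ with $\varphi_x$ the $\mathcal N(\psi(x),I)$ density and $C$ the uniform likelihood-ratio bound, so it suffices to control the Gaussian integral.

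Next I would pin down the geometry of the set $D$ on which the integrand is nonzero. The integrand vanishes unless $1$ lies between $u_1^\transpose z$ and $u_2^\transpose z$; on that event, with $\delta\eqdef\|\theta^1-\theta^2\|=\|u_1-u_2\|$,
\[
|u_1^\transpose z-1|\le |u_1^\transpose z-u_2^\transpose z|=|(u_1-u_2)^\transpose z|\le \delta\,\|z\| .
\]
Hence $D\subseteq\{z:|u_1^\transpose z-1|\le\delta\|z\|\}$, and since $u_1^\transpose z=u_1^\transpose z_{\|}$ depends only on the projection $z_{\|}$ of $z$ onto the (at most two-dimensional) subspace $\mathrm{span}\{u_1,u_2\}$, so does $D$. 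It therefore remains to prove the key estimate $\int_{D}\|z\|\,\varphi_x(z)\,dz\le \ell\,\delta$ with $\ell$ independent of $x,u_1,\delta$ (the case $\delta\ge 1$ being trivial since the integral is bounded by a fixed Gaussian moment).

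For this estimate I would factor the Gaussian as $\varphi_x=\varphi_x^{\|}\otimes\varphi_x^{\perp}$ and integrate out the orthogonal coordinates: using $\|z\|\le\|z_{\|}\|+\|z_{\perp}\|$ and that $\int\|z_{\perp}\|\varphi_x^{\perp}\,dz_{\perp}$ is a bounded Gaussian absolute moment (uniform over means in a bounded set), the problem collapses to a two-dimensional integral of $(\|w\|+m_{\perp})\varphi^{(2)}(w)$ over the planar ``bowtie'' $\{w:|u_1^\transpose w-1|\le\delta\|w\|\}$. In polar coordinates $w=r\omega$ the admissible radii for a direction $\omega$ with $\alpha\eqdef u_1^\transpose\omega$ form, when $\alpha>\delta$, the interval $[\,1/(\alpha+\delta),\,1/(\alpha-\delta)\,]$ of length $2\delta/(\alpha^2-\delta^2)$ on which $r\approx 1/\alpha$; integrating $r\,(\|w\|+m_\perp)\varphi^{(2)}(r\omega)$ over this interval and then over $\omega$ gives a finite constant times $\delta$, because the apparent $1/\alpha^2$ growth as $\alpha\to 0$ is killed by $\varphi^{(2)}(r\omega)\sim e^{-c/\alpha^2}$ there (and the near-orthogonal directions $\alpha\le\delta$, where admissible radii satisfy $r\gtrsim 1/\delta$, contribute only the super-exponentially small tail). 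Uniformity in $x$ follows from the uniform constant $C$ in A3.1. Combining the three paragraphs yields the bound with $\ell_M\eqdef C\,\ell\,\sup_{x\in\state}\|\psi(x)\|$.

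The step I expect to be the main obstacle is precisely this last estimate: because the symmetric-difference region is an \emph{unbounded} bowtie whose width in the $u_1$-direction grows like $\delta\|z\|$, a naive ``slab of width $O(\delta)$'' bound fails, and a Cauchy--Schwarz reduction to the Gaussian \emph{measure} of $D$ only delivers a Hölder-$\tfrac12$ modulus rather than the Lipschitz bound required here. Obtaining the linear dependence on $\delta$ genuinely requires exploiting Gaussian tail decay to tame the growing width and handling the direction nearly parallel to the hyperplane with care, while keeping every constant uniform in $x$ through the likelihood-ratio bound of A3.1.
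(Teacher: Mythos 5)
Your setup coincides with the paper's: condition on $X_n=x$ via the Markov property, use A3.2 to rewrite the stopping event as a fixed half-space $\{(\theta^i-\theta_{c})^\transpose\psi(X_{n+1})\le 1\}$, and use the likelihood-ratio bound of A3.1 to pass to a Gaussian integral. The core estimate is where you genuinely diverge. The paper reduces to the scalar map $m(\theta)=\Expect[\ind\{Z^\transpose\theta\le 1\}]$ with $Z\sim\mathcal N(\psi(x),I)$, writes it as a Gaussian CDF evaluated at $(1-\theta^\transpose\psi(x))/\|\theta\|$, and checks by explicit differentiation that its gradient is bounded uniformly in $\theta$ and $x$ (including the limits $\|\theta\|\to 0$ and $\|\theta\|\to\infty$). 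You instead bound the $\|z\|$-weighted Gaussian content of the \emph{symmetric difference} of the two half-spaces directly, trapping it in the bowtie $\{z:|u_1^\transpose z-1|\le\delta\|z\|\}$ and integrating in polar coordinates on $\mathrm{span}\{u_1,u_2\}$. Your route costs more computation but is the more faithful one on two counts: first, the quantity that actually needs controlling is $\Expect\big[\,|\ind\{u_1^\transpose Z\le1\}-\ind\{u_2^\transpose Z\le1\}|\cdot\|Z\|\,\big]$ --- the entries of $M_{\psi,\theta}$ multiply the indicator by $\psi_j(X_{n+1})$, which changes sign --- i.e.\ the measure of the symmetric difference, which dominates (and is not controlled by) the difference of measures $|m(\theta^1)-m(\theta^2)|$ that the paper's derivative computation bounds; second, you retain the unbounded factor $\|z\|$ and pay for it with a Gaussian moment, where the paper absorbs $\|\psi(x)z^\transpose\|\,\ell(z\mid x)$ into a constant even though that product is not bounded in $z$. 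What the paper's approach buys is brevity and a closed-form derivative; what yours buys is a bound on the correct quantity, at the price that your two-dimensional polar estimate is only sketched --- to complete it you would write out the case split over $\alpha=u_1^\transpose\omega$ (the regime $\alpha\le\delta$, where admissible radii exceed $(2\delta)^{-1}$ and contribute a super-exponentially small tail, and the regime $\alpha>\delta$, where the interval length $2\delta/(\alpha^2-\delta^2)$ is integrated against the bounded, integrable profile $r^2e^{-(r-\|\psi(x)\|)_+^2/2}$ on $r\ge(\alpha+\delta)^{-1}$), with constants uniform over the compact range of $\psi(x)$. Your closing diagnosis --- that a naive slab bound or a Cauchy--Schwarz reduction to the Gaussian measure of $D$ only yields a H\"older-$\frac{1}{2}$ modulus --- is accurate, and is precisely why either your polar computation or the paper's explicit CDF calculation is unavoidable here.
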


\begin{proof}
	For $\theta^1, \theta^2$, by Markov property with $X_n=x$, we rewrite the term as
	\[
	\begin{aligned}    &\hspace{-0.4in}\Expect\big[\psi(X_n)\clS_{\theta^1}\psi(X_{n+1})^\transpose|\mathcal F_n\big] - \Expect\big[\psi(X_n)\clS_{\theta^2}\psi(X_{n+1})^\transpose|\mathcal F_n\big]\\
	&\hspace{-0.2in}= \Expect\big[\psi(x)\clS_{\theta^1}\psi(X_{n+1})^\transpose|X_n=x\big] - \Expect\big[\psi(x)\clS_{\theta^2}\psi(X_{n+1})^\transpose|X_n=x\big]
	\end{aligned}
	\]
	
	By Assumption A3, the terminating cost function $c_s-1$ is in the span of basis. There exists some constant $\theta_{cs}$ such that
	\[
	c_s(y) \equiv \psi^\transpose(y) \theta_{cs} + 1,\quad \forall y\in \state.
	\]
	Denote $Z\sim\mathcal N(\psi(x), I)$ as the random variable with density $p_{Z}(\cdot)$, we have, for some $\ell_{E}>0$,
	\[
	\begin{aligned}
	&\hspace{-0.4in} \|\Expect\big[\psi(x)\clS_{\theta^1}\psi(X_{n+1})^\transpose|X_n=x\big] - \Expect\big[\psi(x)\clS_{\theta^2}\psi(X_{n+1})^\transpose|X_n=x\big]\| \\
	&\hspace{-0.6in}\leq \Expect\big[\|\psi(x)\psi^\transpose(X_{n+1})\| \cdot\big| \mathbb I\{\psi(X_{n+1})^\transpose\theta^1 \leq c_s(X_{n+1})\} - \mathbb I\{\psi(X_{n+1})^\transpose\theta^2 \leq c_s(X_{n+1})\}\big| |X_n=x\big] \\
	&\hspace{-0.6in}= \Expect\big[\|\psi(x)\psi^\transpose(X_{n+1})\| \cdot\big| \mathbb I\{\psi(X_{n+1})^\transpose(\theta^1-\theta_{cs}) \leq 1\} - \mathbb I\{\psi(X_{n+1})^\transpose(\theta^2-\theta_{cs}) \leq 1\}\big| |X_n=x\big]  \\
	&\hspace{-0.6in} = \int \|\psi(x)z^\transpose\|\cdot\ell(z|x) \cdot\big| \mathbb I\{z^\transpose(\theta^1-\theta_{cs}) \leq 1\} - \mathbb I\{z^\transpose(\theta^2-\theta_{cs}) \leq 1\}\big| p_Z(z)dz \\
	&\hspace{-0.6in}\leq \ell_E\int \big| \mathbb I\{z^\transpose(\theta^1-\theta_c) \leq 1\} - \mathbb I\{z^\transpose(\theta^2-\theta_c) \leq 1\}\big| p_Z(z)dz \\
	&\hspace{-0.6in}= \ell_E \Expect[\big| \mathbb I\{Z^\transpose(\theta^1-\theta_c) \leq 1\} - \mathbb I\{Z^\transpose(\theta^2-\theta_c) \leq 1\}\big| \big]
	\end{aligned}
	\]
	As a result, we only need to show that the expectation in the last line is Lipschitz continuous for a Gaussian random variable, for which a bounded derivative of $\Expect[\mathbb I\{Z^\transpose\theta \leq 1\}]$ w.r.t $\theta$ will suffice. For each $\theta$, $Z^\transpose\theta$ follows Gaussian distribution $\mathcal N(\theta^\transpose\psi(x), \theta^\transpose\theta)$. Denote $\Expect[\mathbb I\{Z^\transpose\theta \leq 1\}]$ as $m(\theta)$
	\[
	m(\theta)\eqdef\Expect[\mathbb I\{Z^\transpose\theta \leq 1\} \big] = \int_{-\infty}^1 \frac{1}{\sqrt{2\pi \theta^\transpose \theta}}\exp\Big(-\frac{1}{2\|\theta\|^2}(v-\theta^\transpose\psi(x))^2\Big)dv
	\]
	Applying the change of variable $u = \frac{1}{\sqrt{2\|\theta\|^2}}(v-\theta^\transpose\psi(x))$, we have
	\[
	\begin{aligned}
	m(\theta) &= \int_{-\infty}^{\frac{1}{\sqrt{2\| \theta\|^2}}(1-\theta^\transpose\psi(x))}\frac{1}{\sqrt{2\pi \|\theta\|^2}}\exp\big(-u^2\big)du \sqrt{2\|\theta\|^2}\\
	&= \frac{1}{\sqrt{\pi}}\int_{-\infty}^{\frac{1}{\sqrt{2\| \theta\|^2}}(1-\theta^\transpose\psi(x))} \exp \big(-u^2\big)du
	\end{aligned}
	\]
	Its derivative is
	\[
	\begin{aligned}
	\frac{d}{d\theta}m(\theta) = [\frac{1}{\sqrt{2\pi}} \frac{-\psi(x)\|\theta\| -(1-\psi(x)^\transpose\theta)\cdot\|\theta\|^{-1}\theta }{\|\theta\|^2}]\exp\big(-\frac{(1-\theta^\transpose\psi(x))^2}{2\|\theta\|^2}\big)
	\end{aligned}
	\]
	We observe that
	\begin{enumerate}
		\item If $\|\theta\|\rightarrow \infty$, the exponential term is bounded by 1, and the coefficient before exponential term goes to zero since
		\[
		\frac{\psi(x) }{\|\theta\|}+\frac{(1-\psi(x)^\transpose\theta) }{\|\theta\|^2} \rightarrow 0
		\]
		\item $m'(\theta)$ also vanishes as $\|\theta\|\rightarrow 0$ since
		\[
		\big[\frac{\psi(x) }{\|\theta\|}+\frac{(1-\psi(x)^\transpose\theta) }{\|\theta\|^2}\big]\exp\big(-\frac{(1-\theta^\transpose\psi(x))^2}{\sqrt{2\|\theta\|^2}}\big)\rightarrow 0
		\]
	\end{enumerate}	
	Since the state space $\state$ is compact, there exists a deterministic bound over $m'(\theta)$ that does not depend on $x$. As a result, the objective in Lemma \ref{t:Econtinuous} is Lipschitz continuous with some deterministic Lipschitz constant $\ell_M$.
\end{proof}

\bigskip
\bigskip

\begin{lemma}
	\label{t:poissonCont}
	There is a constant $B_Z$ such that the following holds:  For any family of zero mean functions  $\{f_\theta \}$ satisfying for some constants $B_F$, $\ell_F$,
	\[
	\begin{aligned}
	\sup_x \|f_\theta(x) \| 
	& \leq B_F,
	\\
	\sup_x  \| f_{\theta^1}(x) -f_{\theta^2}(x) \|
	&  \leq \ell_F \|\theta^1 - \theta^2 \|
	\end{aligned}
	\]
	for all $\theta,\theta^1,\theta^2$, then, there are solutions to corresponding Poisson's equation,  $\{\haf_\theta \}$, with zero mean, and satisfying
	\[
	\sup_x  \|  \haf_{\theta^1}(x) - \haf_{\theta^2}(x) \|
	\leq B_Z \ell_F \|\theta^1 - \theta^2 \|
	\] 
	\qed
\end{lemma}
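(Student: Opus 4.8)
The plan is to exhibit the solution of Poisson's equation explicitly through the resolvent series and then control its dependence on $\theta$ by differencing. For each $\theta$, since $f_\theta$ has zero mean, the zero-mean form of Poisson's equation in \Lemma{t:stationaryNoise} reduces to $(I-P)\haf_\theta = f_\theta$, where $Pf(x)=\Expect[f(X_{n+1})\mid X_n=x]$ denotes the action of the transition kernel. I would define the canonical solution
\[
\haf_\theta \eqdef \sum_{n=0}^\infty P^n f_\theta,
\]
and verify that the series converges uniformly, that $\haf_\theta$ has zero mean, and that it solves Poisson's equation. Convergence and the zero-mean property both follow from uniform ergodicity A1: because $\pi(f_\theta)=0$, one has $P^n f_\theta(x)=\int\big(P^n(x,dy)-\pi(dy)\big)f_\theta(y)$, so \eqref{e:unierg} yields a geometric bound $\sup_x\|P^n f_\theta(x)\|\le c_0 D\rho^n B_F$ for a universal constant $c_0$; summing gives $\sup_x\|\haf_\theta(x)\|\le c_0 D B_F/(1-\rho)<\infty$. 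That $\haf_\theta$ solves $(I-P)\haf_\theta=f_\theta$ is the telescoping identity $\sum_{n\ge0}P^nf_\theta-\sum_{n\ge0}P^{n+1}f_\theta=f_\theta$, and invariance $\pi P=\pi$ gives $\pi(\haf_\theta)=\sum_n\pi(P^nf_\theta)=\sum_n\pi(f_\theta)=0$.

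The Lipschitz estimate then follows by applying the same representation to the difference. Writing $g\eqdef f_{\theta^1}-f_{\theta^2}$, linearity of the construction gives $\haf_{\theta^1}-\haf_{\theta^2}=\sum_{n\ge0}P^n g$. Since $g$ is again zero-mean with $\sup_x\|g(x)\|\le\ell_F\|\theta^1-\theta^2\|$ by hypothesis, the identical geometric bound applies termwise, and summing yields
\[
\sup_x\|\haf_{\theta^1}(x)-\haf_{\theta^2}(x)\|\le\sum_{n=0}^\infty c_0D\rho^n\,\sup_x\|g(x)\|\le\frac{c_0 D}{1-\rho}\,\ell_F\,\|\theta^1-\theta^2\|,
\]
so the claim holds with $B_Z\eqdef c_0D/(1-\rho)$, which depends only on the ergodicity constants $D,\rho$ and not on the particular family $\{f_\theta\}$.

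The one point requiring care — and the main technical obstacle — is extracting the geometric sup-norm contraction $\sup_x\|P^n g(x)\|\le c_0 D\rho^n\sup_x\|g(x)\|$ for zero-mean $g$ from the stated ergodicity bound \eqref{e:unierg}, which is phrased setwise as $\|P^n(x,A)-\pi(A)\|\le D\rho^n$. This amounts to passing from a bound on signed-measure increments over individual sets to a total-variation bound and thence to the integral against a bounded function, with the constant $c_0$ absorbing the factor (at most $2$) relating the two. A secondary point worth recording is that fixing the solution to be zero-mean removes the additive-constant ambiguity inherent in Poisson's equation, so $\theta\mapsto\haf_\theta$ is well defined and the difference computation is legitimate; selecting the canonical series solution uniformly across $\theta$ makes this automatic.
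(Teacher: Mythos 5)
Your proof is correct and is essentially the paper's argument made explicit: the paper writes the solution as $\haf_\theta=\mathcal Z f_\theta$ with the fundamental kernel $\mathcal Z=[I-P+\mathbbm 1\otimes\pi]^{-1}=I+\sum_{n\ge1}[P^n-\mathbbm 1\otimes\pi]$, which on zero-mean functions coincides with your series $\sum_{n\ge0}P^nf_\theta$, and then invokes boundedness of $\mathcal Z$ on $L_\infty$ together with linearity to get the Lipschitz bound. Your derivation of the geometric sup-norm bound from \eqref{e:unierg} is precisely the verification (left implicit in the paper) that $\mathcal Z$ is bounded, and it yields the explicit constant $B_Z=c_0D/(1-\rho)$.
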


\begin{proof}
	Let $P$ denote transition kernel of the Markov chain, we define the following fundamental kernel $\mathcal Z$
	\begin{equation}
	\label{eq:fund-kernel}
	\begin{aligned}
	\mathcal Z &\eqdef [I - P + \mathbbm 1\otimes \pi]^{-1} = \sum_{n=0}^\infty [P -\mathbbm 1\otimes\pi]^n = I + \sum_{n=1}^\infty [P^n - \mathbbm 1\otimes \pi]
	\end{aligned}
	\end{equation}
	Note that $\hat f_\theta = \mathcal Z f_\theta$ solves Poisson's equation:
	\[
	P\hat f_\theta = \hat f_\theta - f_\theta 
	\]
	Since $\mathcal Z:L_\infty\rightarrow L_\infty$ is a bounded linear operator, we have that for any $x\in \state$,
	\[
	\|h_{\theta^1}(x) -h_{\theta^2}(x)\|\leq \|h_{\theta^1} - h_{\theta^2}\|_{\infty} \leq \|\mathcal Z\|_\infty\|f_{\theta^1} - f_{\theta^2}\|_\infty \leq \ell_F\|\mathcal Z\|_\infty\|\theta^1 - \theta^2\|
	\]
	where $\|\mathcal Z\|_\infty$ denotes the induced operator norm.	
\end{proof}

\bigskip
\bigskip

\paragraph*{Proof of \Lemma{t:pre-ODEOS}}

Based on (\ref{e:QSNR2OSAdef}, \ref{e:QSNR2OSdef}), the two error sequences in  \eqref{e:SNR2OSlinearSA_preODE} are
\[
\begin{aligned}
\clE^\theta_{n+1} & = c(X_n) \psi(X_n) - b^*  +  \beta \clS^c_{\theta_n}\cstop(X_{n+1}) - \beta \barcstop(\theta_n) 
\\
\clE^A_{n+1} & = A_{n+1} - A(\theta_n)
\end{aligned}
\]
The argument proceeds by decomposing noise sequences into tractable terms, each of which is shown to be ODE-friendly by standard arguments based on solutions to Poisson's equation \cite{shwmak91}. We only present the treatment for $\clE^A_{n+1}$ here since same technique can be applied to $\clE^{A}_{n+1}\theta_{n}$ and $\clE_{n+1}^\theta$.

Denote for $n\ge 0$,  
\begin{equation*}
M_{\psi, \theta} ( {n+1}) = \psi(X_n)\clS_{\theta}\psi^\transpose(X_{n+1})
\end{equation*}
For noise sequence $\mathcal E_{n+1}^A$ in \eqref{e:SNR2OSlinearSA_preODE_A}, we have
\begin{equation}
\label{e:noiseA}
\begin{aligned}
\hspace{-.5em}
\clE^A_{n+1}   = A_{n+1} - A(\theta_n) & = \psi(X_n)\big[\beta \clS_{\theta_n}\psi(X_{n+1}) - \psi(X_{n})\big]^\transpose  -\Expect\big[\psi(X_n)[\beta \clS_{\theta_n}\psi(X_{n+1}) - \psi(X_{n})]^\transpose
\big]  \\
& = \tilA_{n+1}^1 + \beta \tilA_{n+1}^2 + \beta \tilA_{n+1}^3
\end{aligned}
\end{equation}
where
\[
\begin{aligned}
\tilA_{n+1}^1  &=  -   \psi(X_n)\psi^\transpose(X_{n}) + \Expect[\psi(X_n)\psi^\transpose(X_{n})]  
\\
\tilA_{n+1}^2  & =     M_{\psi, \theta_n} ( {n+1})  - \Expect[ M_{\psi, \theta_n} ( {n+1}) \mid \clF_n ] 
\\
\tilA_{n+1}^3  & =       \Expect[ M_{\psi, \theta_n} ( {n+1}) \mid  \clF_n ] - \Expect[ M_{\psi, \theta_n} ( {n+1})  ]     
\end{aligned}  
\]

\Lemma{t:stationaryNoise} implies that the sequence $\{\tilA^1_{n+1}\}$ is ODE-friendly since it is bounded over state space $\state$, and has zero mean. $\{\tilde A_{n+1}^2\}$ is ODE friendly as it is martingale difference sequence with bounded moments.

For $\tilde A_{n+1}^3$, let $\hat f_\theta:\Theta\times \state\rightarrow \Re ^{d\times d}$ solve Poisson's equation:
\[
\Expect[\hat f_\theta(X_{n+1}) - \hat f_\theta(X_n)|\mathcal F_n] = f_\theta(X_n)
\]
with forcing function  $f_\theta(X_n)\eqdef \tilde A^3_{n+1}$. The following representation is obtained:
\[
\begin{aligned}
f_{\theta_n}(X_n) =& \Expect[\hat f_{\theta_n}( X_{n+1}) - \hat f_{\theta_n}(X_n)|\mathcal F_n] \\
=& \Expect[\hat f_{\theta_n}(X_{n+1})|\mathcal F_n] - \hat f_{\theta_n}(X_{n+1}) + \hat f_{\theta_n}(X_{n+1}) - \hat f_{\theta_n}(X_n) \\
=& \underbrace{\Expect[\hat f_{\theta_n}(X_{n+1})|\mathcal F_n] - \hat f_{\theta_n}(X_{n+1})}_{\text{Martingale difference}} + \underbrace{\hat f_{\theta_{n+1}}( X_{n+1}) - \hat f_{\theta_n}( X_n)}_{\text{telescoping}}+ \underbrace{\hat f_{\theta_{n}}(X_{n+1})-  \hat f_{\theta_{n+1}}(X_{n+1})}_{\text{perturbation}}
\end{aligned}
\]
which admits the form of being ODE-friendly, provided the perturbation term $\epsy_n:= \hat f_{\theta_{n}}(X_{n+1})-  \hat f_{\theta_{n+1}}(X_{n+1})$ satisfies
\[
\sum_{k=1}^\infty \gamma_k \|\epsy_k\| < \infty,\quad a.s..
\]
Recall that Lemma \ref{t:Econtinuous} shows $\tilA_{n+1}^3$ is Lipschitz continuous with deterministic Lipschitz constant.  Combining this with Lemma \ref{t:poissonCont} shows that $\hat f_{\theta}(X_{n+1})$ is uniformly Lipschitz continuous w.r.t $\theta$ with Lipschitz constant $B_Z\ell_F$, which indicates 
\[
\begin{aligned}
\sum_{k=1}^\infty \gamma_k\|\epsy_k\| &= \sum_{k=1}^\infty \gamma_k\|\hat f_{\theta_{k}}(X_{k+1})- \hat f_{\theta_{k+1}}(X_{k+1})\| \leq  \sum_{k=1}^\infty \gamma_kB_Z\ell_F\|\theta_{k}-  \theta_{k+1}\| \\
&\leq B_Z\ell_F\sum_{k=1}^\infty \gamma_k\alpha_k\|\haA_k^{\dagger}\psi(X_k)(c(X_k) + \beta\min(c_s(X_{k+1}), Q^{\theta_k}(X_{k+1}))- Q^{\theta_k}(X_k))\|\\
\end{aligned}
\]
which is bounded by the boundedness assumption over $\{\theta_n\}$ and the fact that we use projected pseudo-inverse  of $\haA_{n+1}$.
\hfill$\blacksquare$

\bigskip
\bigskip

\paragraph*{Proof of \Lemma{t:odegain}}
Consider the update rules for $\theta_n,\haA_n$ in \eqref{e:QSNR2OSAdef} and \eqref{e:QSNR2OSdef}. When both update rules are viewed as over the faster time-scale (that is, with step-size sequence $\{\gamma_n\}$), they can be rewritten as \cite{bor08a}:
\begin{equation}
\begin{aligned}
\theta_{n+1} 
=& \theta_n + \gamma_{n+1}\big[o(1) \big] 
\\
\haA_{n+1} 
=& \haA_{n} + \gamma_{n+1}\big[ A(\theta_n) - \haA_n + \mathcal E^A_{n+1}\big]
\end{aligned}
\end{equation}
where the $o(1)$ term is:
\[
o(1) = - \frac{\alpha_{n+1}}{\gamma_{n+1}} \haA_n^{\dagger}\psi(X_n)\big[c(X_n) + \beta\min(c_s(X_{n+1}), Q^{\theta_n}(X_{n+1}))- Q^{\theta_n}(X_n)\big]
\]
It goes to zero provided stability assumption of $\{\theta_n\}$ and boundedness of $\haA_n^{\dagger}$.
It follows that $\{\theta_n, \haA_n\}$ converges {\em a.s.} to the internally chain transitive invariant set of the following ODE \cite{bor08a}
\[
\begin{aligned}
\dot{w}(t) &= 0\\
\dot{\mathcal A} (t) &= A(w_t) - \mathcal {A}(t)
\end{aligned}
\]
Which is $\{(\theta, A(\theta)):\theta\in\Re ^d \}$. In other words, $\|\hat{A}_n-A(\theta_n)\|$ converges to 0 {\em a.s.}. While the invertibility of $A(\theta)$ has been established in Lemma \ref{t:AthetaNegEig}.
\hfill$\blacksquare$

\bigskip
\bigskip

\paragraph*{Proof of \Lemma{t:odetheta}}
Within time interval $[s, s+T]$, consider the evolution of $\barw_{t_k}$ over slow time scale defined by $\{\alpha_k\}$, where $t_k$ denote the time of $k$-th update of $\theta$.
\[
\begin{aligned}
\barw_{t_{k+1}} = \barw_{t_k} - &\alpha_{k+1}\haA_{k+1}^{\dagger}\psi(X_{k})\Big[c(X_k) \\
&+ \beta\min(c_s(X_{k+1}),\barw_{t_{k}}^\transpose \psi(X_{k+1})) -\barw_{t_{k}}^\transpose \psi(X_{k})\Big]
\end{aligned}
\]
Write it in standard stochastic approximation form and replace $\haA_{k+1}^{\dagger}$ with $-A(\theta_k)^{-1} + o(1)$ by Lemma \ref{t:AthetaNegEig} and \ref{t:odegain}
\[
\begin{aligned}
\barw_{t_{k+1}} =\barw_{t_{k}} - &\alpha_{k+1}A^{-1}(\barw_{t_k})\Big[A(\barw_{t_k})\barw_{t_k} + \beta\barcstop(\barw_{t_k}) + b^* \\
&+ \mathcal E_{k+1}^A\barw_{t_k} +\mathcal E_{k+1}^\theta\Big] + o(\alpha_{k+1})
\end{aligned}
\]
With the noise terms being
\[
\begin{aligned}
\mathcal E_{k+1}^A\barw_{t_k} &= A_{k+1}\barw_{t_k} - A(\barw_{t_k})\barw_{t_k} \\
\mathcal E_{k+1}^\theta &= c(X_k) \psi(X_k) - b^*  +  \beta \cstop(X_{k+1})\ind{\{Q^{\theta_k}(X_{k+1}) > \cstop(X_{k+1}) \}} - \beta \barcstop(\theta_k)
\end{aligned}
\]
With them being shown ODE-friendly in Lemma \ref{t:pre-ODEOS}, the ODE approximation for $\barw_t$ follows under the stability assumption of $\theta_n$ \cite{tadic2003asymptotic}.
\hfill$\blacksquare$

\bigskip
\bigskip

\paragraph*{Proof of \Lemma{t:b_ode}}
For result in (i), the first relation holds since mapping $b(\cdot)$ is Lipschitz continuous. {The sub-sequential limit $w_t$ is the solution of the ODE in \eqref{eq:thetaODEsubseq} since the right hand side of \eqref{eq:thetaODEsubseq} is Lipschitz continuous. It is differentiable everywhere within $[0,T]$.

}

By definition in \eqref{e:btheta}, $b_t= b(w_t)$ can be written as follows:
\begin{equation}
\begin{aligned}
b_t
& =  - A(w_t) w_t - \beta \barcstop(w_t)
\\
& = -\Expect\big[\psi(X_n)  (\beta \min \big(c_s(X_{n+1}), Q^{w_t}(X_{n+1}) \big) - Q^{w_t}(X_n) ) \big]
\\
& = \Expect\big[\psi(X_n) c^{w_t}(X_n) \big]
\end{aligned}
\label{eq:costODEb}
\end{equation}
where, for any $w \in \Re^d$, $c^{w}$ is the cost function that solves the fixed point equation \eqref{e:QOSBell} with $\theta^*$ replaced by $w$:
\[
c^{w}(x) \eqdef -\Expect \big [\beta \min(c_s(X_{n+1}), Q^{w}(X_{n+1}))  \mid X_n = x \big] + Q^{w}(x), \qquad x \in \state
\]
The cost function $c^{w}$ is Lipschitz in $w$. Therefore it is Lipschitz continuous over $[0, T]$ and absolutely continuous over $[0,T]$ that has derivative almost everywhere.
Let $t_0$ be a point of differentiability for $c_t = c^{w_t}$. Both $w_t$ and $c_t(x)$ (for each $x \in \state$) are approximated by a line at this time point:
\[
\begin{aligned}
w_t &= w_{t_0} + (t-t_0)v^{w} + o(|t-t_0|)\\
c_t(x) &= c_{t_0}(x) + (t-t_0)v^{c}(x) + o(|t-t_0|), \quad t \sim t_0
\end{aligned}
\]
where $v^w, v^c(x)$ are the respective derivatives. The assertion 
\begin{equation}
v^c(x)= - \Expect[\beta \clS_{w_{t_0}}\psi(X_{n+1})^\transpose - \psi(X_{n})^\transpose \mid X_n = x] v^w
\label{e:vcx}
\end{equation}
will then imply the statement of the Lemma. 

Denote $L_t^c = c_{t_0}+(t-t_0)v^c$, $L_t^w = w_{t_0}+(t-t_0)v^w$. For each $t $, we have:
\[
\begin{aligned}
L^c_t(x) =
&-\Expect[\beta \min(c_s(X_{n+1}), Q^{L^w_t}(X_{n+1})) - Q^{L^w_t}(X_n) \mid X_n = x]
\\
=
& -\Expect[\beta \clS_{L^w_t}Q^{L^w_t}(X_{n+1}) + \beta \clS^c_{L^w_t}c_s(X_{n+1}) - Q^{L^w_t}(X_n) \mid X_n = x]
\\
\geq & -\Expect[\beta \clS_{w_{t_0}}Q^{L^w_t}(X_{n+1}) + \beta \clS^c_{w_{t_0}}c_s(X_{n+1}) - Q^{L^w_t}(X_n) \mid X_n = x]
\\
= & c_{t_0}(x) - (t-t_0)\Expect[\beta \clS_{w_{t_0}}\psi(X_{n+1})^\transpose - \psi(X_{n})^\transpose \mid X_n = x]v^w
\end{aligned}
\]
The above inequality is true for both $t > t_0$ and $t < t_0$. For $t \sim t_0$, it follows that the inequality becomes an equality, and therefore \eqref{e:vcx} holds. {The ODE \eqref{eq:costODE} for $b_t$ holds by replacing $w_t$ with $b_t$ in \eqref{eq:thetaODEsubseq} using the above derivative. }
\hfill$\blacksquare$

\end{document}